\newtheorem{theorem}{Theorem}
\newtheorem{lemma}[theorem]{Lemma}
\newtheorem{definition}[theorem]{Definition}
\newtheorem{obs}[theorem]{Observation}
\newtheorem{cor}[theorem]{Corollary}
\newtheorem{fact}[theorem]{Fact}
\newcommand{\superscript}[1]{\ensuremath{^{\textrm{#1}}}}
\newcommand{\johnc}[1]{{\sc \textcolor{red}{(xiii) #1}}}
\newcommand{\johnd}[1]{{\sc \textcolor{red}{(xiv) #1}}}
\renewcommand{\johnc}[1]{}
\renewcommand{\johnd}[1]{}
\newcommand{\LL}[1]{{\textcolor{Blue}{#1}}}
\newcommand{\checked}{}
\newwrite\notationfile
\newcommand{\jnc}[3]{\write\notationfile{$\mathrm{\string\string\string#1}$ & \ensuremath{#2} & #3 \string\\}\newcommand{#1}{\ensuremath{#2}}}
\newcommand{\opIns}{\ensuremath{\text{\sc Insert}}}
\newcommand{\opEm}{\ensuremath{\text{\sc Extract-Min}}}
\newcommand{\opDc}{\ensuremath{\text{\sc Decrease-Key}}}
\newcommand{\opMg}{\ensuremath{\text{\sc Meld}}}
\newcommand{\eIns}{\ensuremath{\text{\sc Evolve-Insert}}}
\newcommand{\eDc}{\ensuremath{\text{\sc Evolve-Decrease-Key}}}
\newcommand{\eDmr}{\ensuremath{\text{\sc Evolve-Designated-Minimum-Root}}}
\newcommand{\eEm}{\ensuremath{\text{\sc Evolve-Extract-Min}}}
\newcommand{\eBs}{\ensuremath{\text{\sc Evolve-Big-Small}}}
\newcommand{\ePerm}{\ensuremath{\text{\sc Evolve-Permute}}}
\newcommand{\tent}[1]{\overline{#1}}
\newcommand{\subop}[2]{{\sc #1(\ensuremath{#2})}}
\newcommand{\rcfvalue}[2]{\frac{2^{#1-1}}{\log #2}}
\newcommand{\rcgvalue}[1]{\frac{1}{2 \log #1}}
\newcommand{\cst}{\frac{n  }{\rcf {} \log n}} %Crazy subtree size
\newcommand{\pp}{++}
\newcommand{\consd}{4}
\jnc{\Ids}{\ensuremath{\Xi}}{?}
\jnc{\Distinct}{\ensuremath{\xi}}{?}
\jnc{\opa}{\ensuremath{a}}{Description}
\jnc{\opb}{\ensuremath{b}}{Description}
\jnc{\opc}{\ensuremath{c}}{Description}
\jnc{\opA}{\ensuremath{A}}{Description}
\jnc{\opB}{\ensuremath{B}}{Description}
\jnc{\opC}{\ensuremath{C}}{Description}
\jnc{\subops}{\eta}{Description}
\jnc{\pointers}{\rho}{Description}
\jnc{\pointer}{p}{Description}
\jnc{\state}{s}{Description}
\jnc{\rank}{r}{Description}
\jnc{\alg}{\mathcal{A}}{The algorithm with $o(\log \log n)$-time decrease key which is being assumed to exist.}
\jnc{\cem}{c}{The (maximum of) the constants hidden in the big-O of the $O(\log n)$-time \opEm\ and \opIns\ operations. }
\jnc{\rce}{j(n)}{Description}
\jnc{\rcf}{f(n)}{Description}
\jnc{\rcg}{g(n)}{Description}
\jnc{\rch}{h}{Maximum fraction of the time spent doing fix-up}
\jnc{\rcl}{\ell}{Amortized runtime is $\ell \log n$ per round}
\jnc{\fdc}{dc(n)}{Cost of decrease-key}
\jnc{\vio}{v}{Description}
\jnc{\akey}{x}{A key value or node}
\jnc{\theset}{\mathcal{S}}{Set of items stored by the data structure}
\jnc{\acem}{\ensuremath{e}}{...}
\jnc{\dcc}{dc}{Number of \opDc\ performed in a permutation evolution in round $i$}
\jnc{\vs}{v}{Size of the violation sequence of the \opEm\ in round $i$}
\jnc{\csmall}{z}{At least $z$ of the rounds are small}
\jnc{\rootsize}{m}{The size of a particular root}
\jnc{\noninc}{N}{Nonincremental sibling set}
\jnc{\bigroot}{r}{Location of largest root}
\jnc{\rcd}{{\ensuremath{j(n)}}}{Used in rank definition}
\jnc{\round}{\circ}{Round}
\jnc{\umc}{k}{Number of unmarked children}
\jnc{\Vio}{V}{Violation set: those nodes that become marked in a extract-min evolution}
\jnc{\Classification}{C}{Classification in a big-small evolution}
\jnc{\Evo}{\Psi}{A sequence of evolutions}
\jnc{\evo}{\psi}{One evolution in a sequence of evolutions}
\jnc{\runb}{R(B)}{Runtime to execute operation sequence $B$}
\jnc{\rcm}{m}{Constant involved in size of number of decrease-key operations in small round}
\newenvironment{description*}%
  {\vspace{-1ex}\begin{description}%
    \setlength{\itemsep}{-0.5ex}%
    \setlength{\parsep}{0pt}}%
  {\end{description}}
\newenvironment{itemize*}%
  {\vspace{-1ex}\begin{itemize}%
    \setlength{\itemsep}{-0.5ex}%
    \setlength{\parsep}{0pt}}%
  {\end{itemize}}
\newenvironment{enumerate*}%
  {\vspace{-1ex}\begin{enumerate}%
    \setlength{\itemsep}{-0.5ex}%
    \setlength{\parsep}{0pt}}%
  {\end{enumerate}}
\titlespacing{\section}{0pt}{3pt}{12pt}
\titlespacing{\subsection}{0pt}{3pt}{12pt}
\titlespacing{\subsubsection}{0pt}{3pt}{12pt}
\titlespacing{\paragraph}{0pt}{0pt}{12pt}
\titlespacing{\subparagraph}{0pt}{3pt}{12pt}
\titleformat{\section}[runin] 
{\normalfont\bfseries} 
{\thesection}{1em}{}[.] 
\titleformat{\subsection}[runin] 
{\normalfont\bfseries} 
{\thesubsection}{1em}{}[.] 
\titleformat{\subsubsection}[runin] 
{\normalfont\bfseries} 
{\thesubsubsection}{1em}{}[.] 
\setlist{noitemsep,topsep=0pt}
\title{Why some heaps support\\ constant-amortized-time decrease-key operations,\\ and others do not}
\author{John Iacono\thanks{Research supported by NSF Grant CCF-1018370.}}
\date{{\sc The Polytechnic Institute of New York University}\\Brooklyn, New York, USA}
\begin{document}

\maketitle

\includecomment{shortonly}
\excludecomment{fullonly}

\newcommand{\jlabel}[1]{\label{short:#1}}
\newcommand{\jref}[1]{\ref{short:#1}}
\newcommand{\jeqref}[1]{\eqref{short:#1}}
\newcommand{\shortfull}[2]{#1}

\newcommand{\proofapp}[1]{For proof, see Lemma~\ref{full:#1} in the Appendix.}

\includecomment{fullonly}
\excludecomment{shortonly}

\renewcommand{\jlabel}[1]{\label{full:#1}}
\renewcommand{\jref}[1]{\ref{full:#1}}
\renewcommand{\jeqref}[1]{\eqref{full:#1}}
\renewcommand{\shortfull}[2]{#2}

\renewcommand{\proofapp}[1]{}

\titlespacing{\section}{0pt}{3pt}{12pt}
\titlespacing{\subsection}{0pt}{3pt}{12pt}
\titlespacing{\subsubsection}{0pt}{3pt}{12pt}
\titlespacing{\paragraph}{0pt}{0pt}{12pt}
\titlespacing{\subparagraph}{0pt}{3pt}{12pt}

\titleformat{\section}
{\LARGE\bfseries} 
{\thesection}{1em}{} 

\titleformat{\subsection}
{\Large\bfseries} 
{\thesubsection}{1em}{}

\titleformat{\subsubsection}
{\large\bfseries} 
{\thesubsubsection}{1em}{}

%\pagebreak
%\appendix

%\renewcommand{\sfdefault}{iwona}
%\renewcommand{\rmdefault}{iwona}
%\renewcommand{\ttdefault}{iwona}

%\begin{center}
%\Huge 
%Why some heaps support constant-amortized-time decrease-key operations, and others do not
%\\ \ \\
%Full Version
%\\ \ \\ \Large
%John Iacono
%\\ \ \\ \large
%The Polytechnic Institute of New York University\\Brooklyn, New York, USA
%
%
%\label{full}

%\end{center}
%\vspace{1in}

%!TEX root = paper22.tex

\begin{abstract}
A lower bound is presented which shows that a class of heap algorithms in the pointer model with only heap pointers must spend $\Omega \left( \frac{\log \log n}{\log \log \log n} \right)$ amortized time on the \opDc\ operation (given $O(\log n)$ amortized-time \opEm). 
Intuitively, this bound shows the key to having $O(1)$-time \opDc\ is the ability to sort $O(\log n)$ items in $O(\log n)$ time; Fibonacci heaps [M.~.L.~Fredman and R.~E.~Tarjan. J.~ACM 34(3):596-615 (1987)] do this through the use of bucket sort. Our lower bound also holds no matter how much data is augmented; this is in contrast to the lower bound of Fredman [J. ACM 46(4):473-501 (1999)] who showed a tradeoff between the number of augmented bits and the amortized cost of \opDc. A new heap data structure, the \emph{sort heap}, is presented. This heap is a simplification of the heap of Elmasry [SODA 2009: 471-476] and shares with it a $O(\log \log n)$ amortized-time \opDc, but with a straightforward implementation such that our lower bound holds. Thus a natural model is presented for a pointer-based heap such that the amortized runtime of a self-adjusting structure and amortized lower asymptotic bounds for \opDc\ differ by but a $O(\log \log \log n)$ factor.
\end{abstract}
\vfill

\begin{shortonly}
\begin{center}
We have chosen to include a full version which contains all proofs, in addition to some results omitted from the 10-page version and a more detailed and generously typeset presentation. The full version appears starting on page~\pageref{full}. The references are at the end, starting on page~\pageref{refs}.
\end{center}
\vfill
\end{shortonly}

\pagebreak

\begin{fullonly}
\begin{sidewaystable} \footnotesize 
\begin{tabular}{c|ccccc}
&Fibonacci Heap& Pointer\\
&Rank-Pairing Heap &  Rank-Pairing & Pairing heap & Elmasry's heap & Sort heap\\ \hline \hline
Introduced &  \parbox{0.5in}{\center \cite{DBLP:journals/jacm/FredmanT87} \cite{DBLP:journals/siamcomp/HaeuplerST11}} & \parbox{1.5in}{\center Variant of Rank-Pairing heap described here} & \cite{DBLP:journals/algorithmica/FredmanSST86}&\cite{DBLP:conf/esa/Elmasry10,DBLP:conf/soda/Elmasry09}  & This paper\\ 
\hline
Upper bound for round & $O(\log n)$ & $O(\log n \log \log n)$ & $O(\log n2^{\sqrt{\log \log n}})$ & $O(\log n \log \log n)$ & $O(\log n \log \log n)$\\
&  & & \cite{DBLP:conf/focs/Pettie05} \\ 
\hline
Trivial lower bound & $\Omega(\log n)$ & $\Omega(\log n)$ & $\Omega(\log n)$ & $\Omega(\log n)$ & $\Omega(\log n)$\\
\parbox{1.5in}{\center Fredman's lower bound \cite{DBLP:journals/jacm/Fredman99}} & $\Omega(\log n)$ &$\Omega(\log n)$ &$\Omega(\log n \log \log n)$  &Does not apply &Does not apply\\
Our lower bound& Does not apply &  $\Omega\left( \frac{\log n \log \log n}{\log \log \log n} \right)$ &  $\Omega\left( \frac{\log n \log \log n}{\log \log \log n} \right)$ & Does not apply &  $\Omega\left( \frac{\log n \log \log n}{\log \log \log n} \right)$\\
\hline
Tightness of analysis & $O(1)$ & $O(\log \log \log n)$ & $O(2^{\sqrt{\log \log n}} \log \log  n)$ & $O(\log \log n)$ & $O(\log \log \log n)$\\
\hline
\hline
Augmented data & Yes & Yes & No & No & No
\end{tabular}
\caption{Comparison of various heaps, giving their amortized runtimes per round, where a round consists of one \opIns, $\log n$ \opDc\ operations, and one \opEm\ on a heap of size $n$. Tightness of analysis is the ratio of the best upper bound to the best lower bound. }
\jlabel{t1}
\end{sidewaystable}
\end{fullonly}

%\tableofcontents

\section{Introduction}

While \opIns\ and \opEm\ are supported by all priority queues, there is one additional operation which is of use in some algorithms: \opDc. The fast execution of \opDc\ is vital to the runtime of several algorithms, most notably Dijkstra's algorithm \cite{springerlink:10.1007/BF01386390} for single-source shortest paths in a graph. The constant-amortized-time implementation of the \opDc\ operation is the defining feature of the \emph{Fibonacci heap} \cite{DBLP:journals/jacm/FredmanT87} data structure. 

In \cite{DBLP:journals/algorithmica/FredmanSST86}, a new heap called the \emph{pairing heap} was introduced. The pairing heap is a self-adjusting heap, whose design and basic analysis closely follows that of splay trees \cite{DBLP:journals/jacm/SleatorT85}. They are much simpler in design than Fibonacci heaps, and they perform well in practice.
\begin{fullonly}
 They were included in the (pre-STL) C++ data structures library. We have shown that they have a working-set like runtime bound \cite{DBLP:conf/swat/Iacono00}. 
\end{fullonly}

It was conjectured at the time of their original presentation that pairing heaps had the same $O(\log n)$ amortized time \opEm, and $O(1)$ amortized time \opIns\ and \opDc\ as Fibonacci heaps, however, at their inception only a $O(\log n)$ amortized bound was proven for all three operations. Stasko and Vitter~\cite{DBLP:journals/cacm/StaskoV87} provided some simulation results which showed that $O(1)$ \opDc\ appeared likely. We have shown that \opIns\ does in fact have $O(1)$ amortized time \cite{DBLP:journals/corr/abs-1110-4428,DBLP:conf/swat/Iacono00}. 

However in \cite{DBLP:journals/jacm/Fredman99}, Fredman refuted the conjectured constant-amortized-time \opDc\ in pairing heaps by proving that pairing heaps have a lower bound of $\Omega(\log \log n)$ on the \opDc\ operation. The result he proved was actually more general: he created a model of heaps which includes both pairing heaps and Fibonacci heaps, and produced a tradeoff between the number of bits of data augmented and a lower bound on the runtime of \opDc. Pairing heaps have no augmented bits, and were shown to have a $\Omega(\log \log n)$ amortized lower bound on \opDc\ while he showed that in his model a $O(1)$ \opDc\ requires $\Omega(\log \log n)$ bits of augmented information per node, which is the number of bits of augmented information used by Fibonacci heaps and variants%
\begin{fullonly}
 (A number called rank, which is an integer with logarithmic range is stored in each node)%
\end{fullonly}
 . More recently, Pettie has shown a upper bound of $2^{O(\sqrt{\log \log n})}$  for \opDc\ in pairing heaps \cite{DBLP:conf/focs/Pettie05}. It remains open where in the range $\Omega(\log \log n) \ldots 2^{O(\sqrt{\log \log n})}$ the true cost of \opDc\ in a pairing heap lies.

Elmasry has shown recently that a simple variant of pairing heaps has $O(\log \log n)$ amortized time \opDc\ operation \cite{DBLP:conf/esa/Elmasry10,DBLP:conf/soda/Elmasry09}. However, 
\begin{fullonly}
for technical reasons described later, including a non-standard implementation of \opDc, 
\end{fullonly}
this variant is not in Fredman's model and thus the $\Omega(\log \log n)$ lower bound does not apply. 

So, it would seem from the preceding exposition that the situation has been essentially resolved: Fibonacci heaps are complex but optimal, while the elegant pairing heaps (and Elmasry's variant) are as good as a self-adjusting structure can get (One defining feature of a \emph{self-adjusting} structure is that they store no augmented data in every node). In the case of dictionaries, we have the hope of instance-based optimality, as evidenced by the dynamic optimality conjecture \cite{DBLP:journals/jacm/SleatorT85}, while in the case of heaps, self-adjusting structures can not even achieve optimal amortized asymptotic runtimes. We will now propose an alternate interpretation of the facts which leads to a much nicer conclusion.

Fredman's model is nuanced, and has limitations which cause us to introduce here a new model, which we call the \emph{pure heap} model. We first informally describe our model, and then describe how it differs from Fredman's.
A pure heap is a pointer-based forest of rooted trees, each node holding one key, that obeys the heap property (the key of the source of every pointer is smaller than the key of the destination); the nodes may be augmented, and all operations must be valid in the pointer model; heap pointers are only removed if one of the nodes is removed or if a \opDc\ is performed on the node that a heap pointer points to. 
This model is both simple and captures the spirit of many heaps  like the pairing heap, and is meant to be a clean definition analogous to that of the well-established binary search tree (BST) model \cite{DBLP:journals/siamcomp/Wilber89}.
However, Fibonacci heaps are not pure model heaps in the standard textbook presentation \cite{clrs} for the following reason: each node is augmented with a $\log \log n$-bit integer (in the range $1\ldots \log n$), and in the implementation of \opEm\ it is required to separate a number of nodes, call it $k$, into groups of nodes with like number. This is done classically using bucket sort in time $O(k+\log n)$. However, bucket sort using indirect addressing to access each bucket is distinctly not allowed in the pointer model; realizing this, there is a note in the original paper \cite{DBLP:journals/jacm/FredmanT87}  showing how by adding another pointer from every node to a node representing its bucket, bucket sort can be be simulated at no additional asymptotic cost. However, these pointers are non-heap pointers to nodes with huge indegree which store no keys and thus violate the tree requirement of the pure heap model, as well as the spirit of what we usually think of as a heap. 

There have been several alternatives to Fibonacci heaps presented with the same amortized runtimes which are claimed to be simpler than the original. These include
thin heaps \cite{DBLP:journals/talg/KaplanT08},
violation heaps \cite{DBLP:journals/dmaa/Elmasry10},
and rank-pairing heaps \cite{DBLP:journals/siamcomp/HaeuplerST11}.
All of these heaps also use indirect addressing or non-heap pointers. The rank-pairing heap has the cleanest implementation of all of them, and implements \opDc\ by simply disconnecting the node from it parents and not employing anything more complicated like the cascading cuts of Fibonacci heaps.

We can easily modify Fibonacci heaps and the aforementioned alternatives to only use heap pointers by simply 
using $O(\log \log n)$ time to determine which of the $O(\log n)$ buckets each of the $k$ keys are in. We call such a variant of rank pairing heaps a \emph{pointer Rank-Pairing} heap\shortfull{.}{, and list in in Table~\jref{t1}.} However, this alteration has the effect of increasing the time of \opDc\ in a Fibonacci heaps and their variants to $O(\log \log n)$. 

Fredman's model differs from ours in several regards. First, Fredman's model requires that comparisons can not be performed unless the nodes being compared must be linked by a heap pointer after the comparison. Second, in the course of an \opEm, any two children of the former root can be randomly accessed and compared at unit cost. Third, the number of augmented bits per node is a parameter of the model. The first restriction, while it admits pairing heaps and Fibonacci heaps, excludes Elmasry's variant. This is because Elmasry's variant sorts the keys in nodes in order to determine how to link them; such sorting is directly against what is allowed in Fredman's model. Our sort heaps, presented in section~\ref{full:s:sortheap}\shortfull{ in the appendix}{}, are not in Fredman's model for the same reason. Also, subjectively, we find the first restriction a bit odd, but it appears to be vital to the result. The second difference means that a fundamental cost in the pointer model is not counted: moving pointers to reach the desired nodes to compare or otherwise manipulate. So, compared to our model, Fredman's model is more restrictive because of the first condition, and more permissive with the second condition. We feel our model is more natural.
\begin{shortonly}
A more detailed discussion of how our model differs from Fredman's can be found in \S\ref{full:s:fh} in the appendix.
\end{shortonly}

Thus, we conclude that the reason that Fibonacci heaps have fast \opDc\ is not (only) because of the augmented bits as Fredman's result suggests, but rather because they depart from the pure heap model. We prove that any pure-heap-model heap has an $\Omega\left( \frac{\log \log n}{\log \log \log n} \right)$ amortized lower bound on decrease key, \emph{no matter how many bits of data each node is augmented with}. 
%Such a bound is more than a mere technicality; it would elevate the status of a pairing heap variant with $O(\log \log n)$ amortized time \opDc\ to something conjectured to be amortized optimal and proved to be within an $O(\log \log \log n)$ factor of being an amortized asymptotical optimal structure. 
In our view, Fibonacci heaps are a typical RAM-model structure that squeeze out a $\log \log n$ factor over the best structure in a natural pointer-based model by beating the sorting bound using bit tricks (of which bucket sort is a very primitive example).

Given our lower bound, we still have the issue that it does not apply to any known self-adjusting heaps known to have fast \opDc\ times. We rectify this in \shortfull{the appendix in }{}\S\ref{full:s:sortheap}, by introducing the \emph{sort heap}. This heap is simply Elmasry's heap with the non-standard \opDc\ replaced with the standard one, or to put in another way, it is identical to the pairing heap except for the choice of pairings used to implement the \opEm\ operation. Our sort heap has an $O(\log \log n)$ amortized \opDc\ operation, and features an analysis that differers markedly from other self-adjusting heaps with fast \opDc.

The paper is structured as follows: 
\shortfull%
{a review of the priority queue ADT is in the appendix in \S\ref{full:s:pq},}%
{we begin by reviewing the priority queue ADT in \S\ref{full:s:pq}, }%
 and then formally present our pure heap model (\S\jref{s:ph})\shortfull{.}{ followed by a comparison with Fredman's generalized pairing heap model (\S\jref{s:fh}).}
\shortfull%
{The main result, our lower bound, is presented in \S\jref{s:lb}. The presentation of sort heaps is in the appendix in \S\ref{full:s:sortheap}. }%
{The main result, our lower bound, is presented in \S\jref{s:lb}, and this is followed by sort heaps which are presented in \S\jref{s:sortheap}. }%
In \S\jref{s:comments} \shortfull{closing comments appear.}{with some thoughts and directions for further work.}

\begin{fullonly}
\section{The priority queue abstract data type} \jlabel{s:pq}

A \emph{priority queue} supports the following operations to maintain a totally ordered set \theset:

\begin{itemize}

\item $\pointer= $\opIns$(\akey)$: Inserts the key $\akey$ into $\theset$ and returns a pointer $p$ used to perform future \opDc\ operations on this key.

\item $\akey= $\opEm$()$: Removes the minimum item in $\theset$ from $\theset$ and returns it.

\item \opDc$(\pointer, \Delta \akey)$: Reduces the key value pointed to by $\pointer$ by some non-negative amount $\Delta \akey$.

\end{itemize}

The key values can be in any form so long as they are totally ordered and an $O(1)$-time comparison function is provided. (This is more permissive than saying they they are comparison-based. We do not restrict algorithms from doing things like making decisions based on individual bits of a key). Some priority queues, including Fibonacci heaps and our sort heaps, also efficiently support the \opMg\ operation where two priority queues are combined into one.
\end{fullonly}

\section{The pure heap model} \jlabel{s:ph}

Here we define the \emph{pure heap} model, and how priority queue operations on data structures in this model are executed in it.

\subsection{Structural invariants and terminology}

The pure heap model requires that at the end of every operation, the data structure is an ordered forest of general heaps. 
%(Equivalently, the data structure is a half-ordered tree. \john{clarify}) 
Each node is associated with a key $\akey \in \theset$. We will use $x$ both to refer to a key and the node in the heap containing the key. Inside each node is stored the key value, pointers to the parent, leftmost child and right sibling of the node, along with other possible augmented information.

The \emph{structure} of the heap is the shape of the forest, without regard to the contents of the nodes. The \emph{location} of a node is its position in the forest of heaps relative to the right (e.g. a node could be described as being the fifth child from the right of a node which is the third child from the right of the fourth root from the right. The idea is that location is invariant under adding new siblings to the left).

An algorithm in the pure heap model implements the priority queue operations as follows:
\shortfull{}{\begin{itemize\shortfull{*}{}}}
\shortfull{}\item \opIns\ operations are executed\shortfull{ at unit cost}{} by adding the new item as a new leftmost heap. \shortfull{}{The cost is defined to be 1.}
\shortfull{}\item \opDc\ is executed\shortfull{ at unit cost}{}  by disconnecting from its parent the node containing the key to be decreased (if it is not a root), decreasing the key, and then placing it as the leftmost root in the forest of heaps. \shortfull{}{The cost is defined to be 1.}
\shortfull{}\item A \opEm\ operation is performed by first executing a sequence of pointer-based suboperations which are fully described below in \S\jref{suboperations}. After executing the suboperations, the forest is required to be monoarboral (i.e.~have only one heap). 
%\john{During extract-min, why do you require the heap to first be made
%monoarboral? This doesn't seem necessary for the lower bound analysis, though
%I see why it is useful in the upper bound analysis of sort heaps.}
Thus, the root of this single tree has as its key the minimum key in $\theset$. This node is then removed, the key value is returned, and its children become the new roots of the forest. The cost is the number of suboperations performed. 
\shortfull{}{\end{itemize\shortfull{*}{}}}

Note that some data structures are not presented exactly in the framework as described above but can be easily put into this mode by being lazy. For example, in a pairing heap, the normal presentations of \opIns\ and \opDc\ cause an immediate pairing with the single existing root. However, such pairings can easily be deferred until the next \opEm, thus putting the resulting structure in our pure-heap framework. 
\begin{fullonly}
For a more elaborate example of transforming a heap into one based on pairings, see \cite{DBLP:conf/wae/Fredman99}.
\end{fullonly}

\subsection{Executing a \opEm; suboperations} \jlabel{suboperations}

\shortfull{To}{Obviously, in order to} execute an \opEm, the minimum must be determined. 
\shortfull{In an \opEm, the forest of heaps must be combined into a single heap using an operation called \emph{pairing}, which takes two roots and attaches the root with larger key value as the leftmost child of the root with smaller key value.}%
{At the beginning of the \opEm\ operation, the structure of the heap may consist of a forest of many heap-ordered trees, the pure-heap model requires that these trees be combined into one tree through a process called \emph{pairing}. The pairing operation takes two roots and attaches the root with larger key value as the leftmost child of the root with smaller key value.}% 
\begin{shortonly}
\footnote{(Note that while the pairing operation brings to mind pairing heaps, it is the fundamental building block of many heaps. Even the \emph{skew heap} \cite{DBLP:journals/siamcomp/SleatorT86}, which seems at first glance to not use anything that looks like the pairing operation, can be shown in all instances to be able to be transformed into a pairing-based structure at no decrease in cost \cite{DBLP:conf/wae/Fredman99}).}
\end{shortonly}
\begin{fullonly}
(Note that while the pairing operation brings to mind pairing heaps, it is the fundamental building block of many heaps. Even the \emph{skew heap} \cite{DBLP:journals/siamcomp/SleatorT86}, which seems at first glance to not use anything that looks like the pairing operation, can be shown in all instances to be able to be transformed into a pairing-based structure at no decrease in cost \cite{DBLP:conf/wae/Fredman99}).  We require this process to happen in the pointer model, where there is some constant number of pointers that start at the leftmost root and move around and perform pairings. 

For the purposes of the analysis, it is needed to have a very fine view of what constitutes a constant-time suboperation, but, what is presented below is constant-time-equivalent to other natural ways of having a pointer model view with the basic primitive being pairing of the roots.
\end{fullonly}

In the execution of the extract-min operation, the use of some constant number $\pointers$ of pointers $\pointer_1, \pointer_2, \ldots \pointer_\rho$ is allowed. They are all initially set to the leftmost root. The constant $\pointers$ is a parameter of the model. \shortfull{}{These are the suboperations that are allowed to execute the \opEm\ operation:}

\begin{shortonly}
The suboperations include: move one of the pointers to the parent, left or right sibling, or leftmost child; pair the nodes pointed to by two pointers; and copy pointers.
A special suboperation, \subop{End}{}, signifies the execution of the \opEm\ is complete.
A full list of suboperations and their precise definitions and preconditions can be found in the \S\ref{full:suboperations} in the appendix.
\end{shortonly}
\begin{fullonly}
\begin{enumerate}

\item \jlabel{so:chpar} \subop{HasParent}{i}: Return if the node pointed to by $\pointer_i$ has a parent.

\item \subop{HasLeftSibling}{i}: Return if the node pointed to by $\pointer_i$ has a left sibling.

\item \subop{HasRightSibling}{i}: Return if the node pointed to by $\pointer_i$ has a right sibling.

\item \subop{HasChildren}{i}: \jlabel{so:chchild} Return if the node pointed to by $\pointer_i$ has any children.

\item \subop{Compare}{i,j}:\jlabel{so:compare} Return if the key value in the node pointed to by $p_i$ is less than or equal to the key value in the node pointed to by $p_j$.

\item \subop{Pair}{i,j}:\jlabel{so:pair} Perform a \emph{pairing} on two pointers $\pointer_i$ and $\pointer_j$ where the tree that $p_j$ points to is attached to $\pointer_i$ as its leftmost subtree. It is a required precondition of this suboperation that both $p_i$ and $p_j$ point to roots, and that this was verified by the \textsc{HasParent} suboperation, and that the key value in the node pointed to to $p_i$ is smaller than the key value in the node pointed to by $p_j$, and that is was verified by the \textsc{Compare} suboperation.

\item \subop{Set}{i,j}: Set pointer $\pointer_i$ to point to the same node as $\pointer_j$.

\item \subop{MoveToParent}{i}: Move a pointer $\pointer_i$ to the parent of the node currently pointed to.
It is a precondition of this operation that the node that $\pointer_i$ points to has a parent, and that this was verified by the \textsc{HasParent} operation.

\item \subop{MoveToLeftmostChild}{i}: Move the pointer $\pointer_i$ to the leftmost child.
It is a precondition of this operation that the node that $\pointer_i$ points to has children, and that this was verified by the \textsc{HasChildren} operation.

\item \subop{MoveToRightSibling}{i}: Move the pointer $\pointer_i$ to the sibling to the right.
It is a precondition of this operation that the node that $\pointer_i$ points to has a right sibling, and that this was verified by the \textsc{HasRightSibling} operation.

\item \subop{MoveToLeftSibling}{i}: \jlabel{so:mvleft} Move a pointer $\pointer_i$ to the sibling to the left
It is a required precondition of this operation that the node that $\pointer_i$ points to has a left sibling, and that this was verified by the \textsc{HasLeftSibling} operation.

\item \subop{End}{}: \jlabel{so:end} Marks the end of the suboperation sequence for a particular \opEm.
It is a required precondition of this operation that the forest of heaps contains only one heap, and that this was verified through the use of the \subop{HasParent}{i}, \subop{HasLeftSibling}{i}, and \subop{HasRightSibling}{i} on a pointer $p_i$ that points to the unique root.

\end{enumerate}

Operations \jref{so:chpar}-\jref{so:chchild} return a boolean; the remainder have no return value.

\end{fullonly}
The total number of suboperations, including the parameters, is defined to be $\subops$. Observe that $\subops=\Theta(\pointers^2)$, which is $\Theta(1)$ since $\rho$ is a constant. 
A sequence of suboperations is a \emph{valid} implementation of the \opEm\ operation if all the preconditions of each suboperation are met and the last suboperation is an \subop{end}{}.

In the pure heap model, the only thing that differentiates between different algorithms is in the choice of the suboperations to execute \opEm\ operations. In these operations it is the role of the particular \emph{heap algorithm} to specify which suboperations should be performed for each \opEm. We place no restrictions as to how an algorithm determines the suboperation sequence for each \opEm\ other than the suboperation sequence must be valid.

This definition of an algorithm encompasses and is more permissive than allowing the algorithm to make decisions to be made based on some data augmented at any node. This is because augmented information is just one type of function of the previous operations whereas we allow the algorithm to decide what suboperations to perform in any manner, subject only to determinism. We also note that while the definition of the pairing operation enforces the heap structure, where every parent is smaller than its children, the algorithm is not restricted from looking at, for example, the individual bits of a key and deciding which pairings to perform based on this. 

\begin{fullonly}
To summarize, our notion of an algorithm allows the algorithm, at every step of determining which suboperation to execute next, to make decisions in any deterministic manner. These computations to determine which suboperation to execute next have no cost; only the execution of the suboperations themselves incurs a cost.
\end{fullonly}

\begin{fullonly}
\section{Fredman's model} \jlabel{s:fh}

Fredman's model, which he calls \emph{generalized pairing heaps} differ from our \emph{pure heap} model in a number of significant aspects:

\shortfull{}{\begin{itemize\shortfull{*}{}}}
\shortfull{}{\item} Generalized pairing heaps are parametrized by the number of bits of augmented data allowed at each node. Pure heaps allow an algorithm to branch as any function of the past; this is equivalent to allowing unlimited augmented data.

\shortfull{}{\item} Generalized pairing heaps limit how performing a comparison can be done by the algorithm. In the process of executing an \opEm\ operation, suppose a comparison is performed between two nodes, neither of which will be removed by the operation. This limits the number of comparisons to be performed in a \opEm\ to be linear in the number of pairings performed. In a generalized pairing heap, the result of this comparison can not be used to determine the future actions of the algorithm. In pure heaps, the result of such a comparison can be used. It is this crucial difference that places Elmasry's variant of the pairing heap and our sort heap in the pure heap model, but not in the generalized pairing heap model. These algorithms perform \opDc\ operations on selected roots after performing comparisons to sort them; sorting roots is out-of-model in the generalized pairing heap model; this is easy to see because sorting requires doing a super-linear number of comparisons. 

\shortfull{}{\item} Generalized pairing heaps do not take into account the time needed to access the items that are to be paired; arbitrary pairing of roots is allowed at unit cost. In the pure heap model, one needs to move pointers to the nodes to be paired using pointer-model-operations on the heap, and these operations must be paid for.
\shortfull{}{\end{itemize\shortfull{*}{}}}
\end{fullonly}

\section{Lower Bound} \jlabel{s:lb}

\subsection{Statement of result} \jlabel{sec:sor}

\begin{theorem} \jlabel{th:main}
In the pure heap model with a constant number $\pointers$ pointers, if\/ \opEm\ and \opIns\ have an amortized cost of $O(\log n)$, then \opDc\ has an amortized cost of $\Omega \left( \frac{\log \log n}{\log \log \log n} \right)$.\end{theorem}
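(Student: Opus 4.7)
The plan is to prove this via an adversary-based information-theoretic lower bound: organize the operation stream into blocks, have an adversary hide a permutation inside the \opDc\ target values of each block, and show that the transcript of suboperations the algorithm performs during the block's \opEm\ calls must be long enough to distinguish between the adversary's choices. Because the \opIns\ and \opEm\ cost budgets per block are fixed at $O(\log n)$ amortized by the hypothesis of the theorem, the suboperation shortfall must be absorbed by the \opDc\ operations.

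Concretely, I would group operations into blocks of $k = \log n$ rounds, a round consisting of one \opIns, $\log n$ \opDc\ operations, and one \opEm. At the start of each block the adversary commits to a permutation $\pi$ on $k^2$ target values all sitting just above the current minimum; in each round it decreases the keys of $k$ (algorithm-visible) pre-existing nodes to the next $k$ targets of $\pi$. After the $k$ \opEm\ outputs of the block, the algorithm has implicitly produced the sorted order of all $k^2$ targets. Since the key values are opaque objects accessible only through the $O(1)$ comparison function, and since among the suboperations only the five boolean queries (\textsc{Compare}, \textsc{HasParent}, \textsc{HasLeftSibling}, \textsc{HasRightSibling}, \textsc{HasChildren}) return anything, the algorithm's entire information channel about $\pi$ is the sequence of bits returned by those queries. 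Crucially, augmented data does not expand this channel: whatever is stored in a node is a deterministic function of the past transcript and the fixed history of operations, so with $T$ boolean queries in a block the algorithm can distinguish at most $2^T$ distinct behaviors.

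Equating this with the $\log((k^2)!) = \Theta(k^2 \log k)$ bits of adversarial uncertainty yields $T = \Omega(\log^2 n \cdot \log \log n)$ suboperations per block; the $O(k \log n) = O(\log^2 n)$ that \opIns\ and \opEm\ are permitted to spend are swamped by this, so the remainder must be charged to the $k^2 = \log^2 n$ \opDc\ operations. A straightforward division would give $\Omega(\log \log n)$ amortized per \opDc; obtaining the advertised $\Omega(\log\log n / \log\log\log n)$ requires a refinement of the counting to account for the ways the pure-heap algorithm can reuse or hide information.

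The main obstacle is exactly that refinement. Two slacks must be controlled: (i) non-branching suboperations (\textsc{Pair}, \textsc{Set}, and the pointer-movement suboperations) inflate the transcript length without yielding bits, which is actually \emph{helpful} to the lower bound if one can show the algorithm is forced to spend them, but makes the naive ``$T$ queries $\Rightarrow$ $2^T$ traces'' inequality loose on the information side; and (ii) the structural footprint of what each \opEm\ returns leaks some information about $\pi$ essentially for free, because the algorithm sees the shape of the remaining forest after each extraction. I expect the crux of the proof to be restricting the adversary to a subfamily of permutations — of size at least $((k^2)!)^{1/\log\log\log n}$ — on which neither of these free-information sources can substitute for new \textsc{Compare} calls. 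This is where the pointer-only constraint of the pure heap model, and specifically the inability to bucket-sort in $O(\log n)$ time, must be exploited to absorb the unbounded augmented-data allowance into a mere $\log\log\log n$ factor rather than collapsing the bound altogether.
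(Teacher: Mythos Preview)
Your proposal rests on a misreading of the model that undermines the whole argument. You assert that ``the key values are opaque objects accessible only through the $O(1)$ comparison function,'' but the pure heap model is explicitly \emph{not} comparison-only: the paper states that ``the algorithm is not restricted from looking at, for example, the individual bits of a key and deciding which pairings to perform based on this,'' and more generally that the choice of the next suboperation may be \emph{any} function of the entire operation history. Your permutation $\pi$ is part of that history (it fixes the $\Delta x$ arguments to \opDc), so the algorithm knows $\pi$ perfectly, for free, before a single suboperation is executed. There is nothing to ``learn'' and hence no information bottleneck of the kind you describe. Even waiving this, the claim that ``after the $k$ \opEm\ outputs of the block, the algorithm has implicitly produced the sorted order of all $k^2$ targets'' is unsupported: correctness only requires the minimum to reach the root at each \opEm, not that the remaining $k^2-k$ targets be totally ordered. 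Many permutations can share a single suboperation trace (agreeing on every \textsc{Compare} actually performed, disagreeing on uncompared pairs) with the algorithm correct on all of them; you never name a bounded quantity that this large surviving family would overflow.

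The paper's proof sidesteps both issues by tracking \emph{structural} rather than key-value distinctness. It maintains a set of operation sequences that are algorithmically indistinguishable (identical suboperation traces) yet have pairwise distinct terminal heap \emph{shapes}; since there are only $2^{O(n)}$ marked rooted ordered forests on $n$ nodes, the log-size of this set is capped at $O(n)$, and that cap is what eventually gets violated. The role your hidden permutation was meant to play is taken by a \emph{permutation evolution}: the adversary locates, in every heap of the set, a common family of children of a large root having pairwise distinct \emph{ranks}---a carefully engineered integer defined so that distinct ranks force distinct unmarked-subtree shapes---and appends \opDc\ calls on these nodes in every possible order, multiplying the set size by a factorial. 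The rank machinery, together with the mark/unmark bookkeeping and the big/small case split needed to guarantee such a family exists, is exactly the missing idea that converts permuted \opDc\ calls into \emph{structural} diversity that survives the \opEm\ pruning. One more inversion: you expect a naive count to give $\Omega(\log\log n)$ and a refinement to \emph{lose} a $\log\log\log n$ factor, but in the paper the $\log\log\log n$ denominator appears because the permutable set the rank function can certify has size only $\Theta\bigl(\log n/(\fdc\log\fdc)\bigr)$; the author in fact conjectures this loss is an artifact.
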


The proof will follow by contradiction and will consume the rest of this section. Assume that there is a pure heap model algorithm $\alg$ where \opEm\ and \opIns\ have an amortized cost of at most $\cem \log n$, for some constant $\cem$, and \opDc\ has an amortized cost of at most  $\fdc$, for some $\fdc=o\left( \frac{\log \log n}{\log \log \log n} \right)$; since this is a lower bound we can also safely assume that $\fdc=\omega(1)$. 
The existence of the algorithm  $\alg$,  the constant $\cem$ and the function $\fdc$ will be assumed in the definitions and lemmas that follow. A sufficiently large $n$ is also assumed, as there are several places noted in what follows, such as using the results of asymptotic expressions, where this is required.

\subsection{Overview of proof}

The proof is at its core an adversary argument. Such arguments look at what the algorithm has done and then decide what operations to do next in order to guarantee a high runtime. But, our argument is not straightforward as it works on sets of sequences of operations rather than a single operation sequence. There is a hierarchy of things we manipulate in our argument:

\begin{fullonly}
\begin{description}
\end{fullonly}

\shortfull{\emph{Suboperaton.}}{\item[Suboperaton.]} The suboperations of \S\jref{suboperations} are the very basic unit-cost primitives that can be used to implement \opEm, the only operation that does not have constant actual cost. It is at this level that definitions have been made to enforce pointer model limitations.
\shortfull{\emph{Operation.}}{\item[Operation.]} We use \emph{operation} to refer to a priority queue operation. \shortfull{}{In this proof, the adversary will only use \opIns, \opDc, and \opEm.}
\shortfull{\emph{Sequence.}}{\item[Sequence.]} Operations are combined to form sequences of operations. 
\shortfull{\emph{Set of operation sequences.}}{\item[Set of operation sequences.]} Our adversary does not just work with a single operation sequence but rather with sets of operation sequences. These sets are defined to have certain invariants on the heaps that result from running the sequences that bound the size of the sets of operation sequences under consideration.
\shortfull{\emph{Evolutions.}}{\item[Evolutions.]} We use the word \emph{evolution} to refer to a function the adversary uses to take a set of operation sequences, and modify it. The modifications performed are to append operations to sequences, remove sequences, and to create more sequences by taking a single sequence and appending different operations to the end.
\shortfull{\emph{Rounds.}}{\item[Rounds.]} Our evolutions are structured into \emph{rounds}.

\begin{fullonly}
\end{description}
\end{fullonly}

The proof will start with a set containing a single operation sequence, and then perform rounds of evolutions on this set; the exact choice of evolutions to perform will depend on how the algorithm executes the sequences of operations in the set. The evolutions in a round are structured in such a way that most rounds increase the size of the set of operations. After sufficiently many rounds, an upper bound on the maximum size of the set of operation sequences will be exceeded, thus giving a contradiction.

Our presentation is structured as follows: 
In \S\jref{s:rank}, we define a rank function.
In \S\jref{s:monotonic}-\jref{s:augmented} we give some invariants and facts about the sequences of operations we will be considering.
In \S\jref{s:evolutiondefinition} we introduce the idea of a set of sequences and explain the invariants of the sets that will be maintained.
We introduce the idea of an evolution in \S\jref{s:evolving} and then describe several types of evolutions in \S\jref{s:einsert}-\jref{ev:perm}. 
These evolutions are structured into rounds in \S\jref{s:rounds}, technical lemmas about rounds appear in \S\jref{s:ubot}, and the final work to obtain the contradiction is in \S\jref{s:pit}.

\subsection{Ranks: definitions and useful facts} \jlabel{s:rank}

\subsubsection{Motivation}

As in many previous works on heaps and trees, the notion of the \emph{rank} of a node in the heap is vital. The rank of a node is meant to be a rough proxy for the logarithm of the size of the subtree of the node. While the basic analysis of pairing heaps and splay trees \cite{DBLP:journals/algorithmica/FredmanSST86,DBLP:journals/jacm/SleatorT85} use exactly this as the rank, the definition of rank here is more delicate and is an extension of that used in \cite{DBLP:journals/jacm/Fredman99}. As in \cite{DBLP:journals/jacm/Fredman99}, rank here is always a nonnegative integer. In previous definitions of rank, the value typically depended only on the current structure of the heap (One exception to this has been in order to get better bounds on \opIns, nodes are treated differently for potential purposes depending on whether or not they will ever be deleted. See \cite{DBLP:journals/cacm/StaskoV87,DBLP:journals/corr/abs-1110-4428,DBLP:conf/swat/Iacono00}). Here, however, the definition is more nuanced in that for the purposes of the analysis only, nodes are classified into \emph{marked} and \emph{unmarked} categories based on the history of the structure, and these marks, along with the current structure of the heap, are used to compute the rank of each node.

\subsubsection{Definition} \jlabel{sec:def}

For ease of presentation, the rank of a node is defined in terms of the function $\rce=2\fdc+1$.
\begin{fullonly}

\end{fullonly}
The general idea is to have the rank of a node be maintained so it is the negation of the key value stored in the node. (Ranks will be non-negative, and we will only give nodes non-positive integer key values; these can be assumed to be perturbed arbitrarily to give a total ordering of key values). The rank of a node can increase as the result of a pairing, and the value of a node can decrease as the result of a \opDc. It is thus our goal to perform a \opDc\ on a node which has had its rank increase to restore it to the negation of its rank. During the time between when a rank increase occurs in a node and the time the \opDc\ is performed, we refer to the node as \emph{marked}. 

Call the \emph{unmarked subtree} of a node to be the subtree of a node if all marked nodes were detached from their parents; the \emph{unmarked structure} of the heap is the structure of the unmarked subtrees of the roots. The rank of a node at a given time will be defined to be a function of the structure of its unmarked subtree.
\begin{fullonly}
 We emphasize that the notion of marking a node is for the purposes of the analysis only, such marks need not be stored.
\end{fullonly}

The following assumes a particular heap structure and marking, as the rank of a node is always defined with respect to the structure of the heap after executing a sequence of heap operations. 
\begin{fullonly}

\end{fullonly}
Let $\akey$ be the node we wish to compute the rank of. Let $\umc$ denote the number of unmarked children of $\akey$, and let $y_1,y_2,\ldots y_{\umc}$ denote these children numbered right-to-left (i.e., in the order which they became children of $\akey$).
\begin{fullonly}

\end{fullonly}
Let $\tau_i(x)$ be a subtree of $x$ consisting of $x$ connected to only the subtrees induced by $y_1, y_2, \ldots y_i$. We will define the function $\rank_i(x)$ as a function of $\tau_i(x)$. The rank of a  node, $\rank(x)$ is $\rank_k(x)$. 
\begin{fullonly}

\end{fullonly}
Each node $y_i$ may be labeled as \emph{efficiently linked} to its parent.
If $ \rank_{i-1}(x)-\rcd \leq \rank(y_i) \leq \rank_{i-1}(x)$, then $y_i$ is said to be efficiently linked to $x$. The case of $ \rank(y_i) > \rank_{i-1}(x)$ will never occur, as pairings will only happen among unmarked nodes, where the rank perfectly matches the negation of the key value.
\begin{fullonly}

\end{fullonly}
We will have the property that $\rank_i(x)$ is either $\rank_{i-1}(x)$ or $\rank_{i-1}(x)+1$; In the latter case, $y_i$ is called \emph{incremental}. 
\begin{fullonly}

\end{fullonly}
Given a node $y_i$, let $j$ be defined to be the index of the first incremental node in the sequence $\langle y_{i-1}, y_{i-2}, \ldots\rangle$; $j$ is defined to be 0 if there is no such incremental node.
The set $\noninc(y_i)$ is defined to be $\{ y_k | j<k \leq i\}$; that is, $y_i$ and the maximal set of its non-incremental siblings to the right.
\begin{fullonly}

\end{fullonly}
Given these preliminaries, we can now give the full definition of the rank of a node that has defined rank:
\begin{fullonly}
$$
\rank_i(x)=
\left\{
	\begin{array}{ll}
		0  & \mbox{if } i = 0 \\ \\
		r_{i-1}(x)+1 & \mbox{\parbox{4in}{(Efficient case) $y_i$ is efficiently linked and is the $\rce $th efficient element of $\noninc(y_i)$
		\\%
		---or--- \\
		(Default case) $|\noninc(y_i)|=2^{\rcd} {}$}}\\ \\
		r_{i-1}(x) & \mbox{otherwise}
	\end{array}
\right.
$$
\end{fullonly}
\begin{shortonly}
$$
\rank_i(x)=
\left\{
	\begin{array}{ll}
		0  & \mbox{if } i = 0 \\ 
		r_{i-1}(x)+1 & \mbox{\parbox{4in}{(Efficient case) $y_i$ is efficiently linked and is the $\rce $th efficient element of $\noninc(y_i)$ or
		(Default case) $|\noninc(y_i)|=2^{\rcd} {}$}}\\ 
		r_{i-1}(x) & \mbox{otherwise}
	\end{array}
\right.
$$
\end{shortonly}
While the rank and mark are interrelated, there is no circularity in their definitions---whether a node is marked depends on its rank and key value and the rank of a node is a function of the ranks and marks of its children. 

%\john{- When you define rank (p. 11), a figure showing a small example heap and the
%rank of one of the nodes might be helpful.}

\subsection{Useful facts about ranks}\jlabel{s:ufar}

\begin{obs}[Structural property of rank]\jlabel{lem:structrank}
Given two nodes $x$ and $y$ with different ranks, the unmarked structure of their induced subtrees must be different.
\end{obs}

This follows directly from the fact that the rank of a node is a function of its induced unmarked subtree.
\begin{fullonly}
\begin{lemma}The size of a unmarked subtree induced by a node of rank $k$ is at most $\rce ^k$.
\end{lemma}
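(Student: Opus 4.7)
The natural approach is induction on the rank $k$; I will describe the inductive step first, since it contains the real content. Consider a rank-$k$ node $x$ with its unmarked children $y_1,\ldots,y_\umc$ listed in the order they became children of $x$. Since $\rank(x)=\rank_\umc(x)=k$, there are exactly $k$ indices $i_1<i_2<\cdots<i_k$ at which an incremental step fires, and I would cut the sequence of children into $k$ \emph{groups} $\noninc(y_{i_1}),\ldots,\noninc(y_{i_k})$, each group being the nonincremental sibling set at the moment of the corresponding rank increase. By the rank definition, each group triggers its rank increase via exactly one of two clauses. In the \emph{efficient} clause, the group contains exactly $e(n)$ efficient children (rank in $[r-1-e(n),r-1]$, where $r-1$ is the rank of $x$ while the group is being assembled) plus at most $2^{e(n)}$ further ``inefficient'' children whose rank is strictly less than $r-1-e(n)$. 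In the \emph{default} clause, the group contains exactly $2^{e(n)}$ children, and since the efficient clause did not fire first, fewer than $e(n)$ of them can be efficient, forcing the remainder to be inefficient.

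Applying the inductive hypothesis $|T(y)|\le e(n)^{\rank(y)}$ to each child, the total contribution of a single group to the unmarked subtree size is at most
\[
e(n)\cdot e(n)^{r-1} \;+\; 2^{e(n)}\cdot e(n)^{r-e(n)-2},
\]
which is of the form $e(n)^{r}(1+o(1))$ because the calibration $e(n)=2\fdc+1$ is large enough to make $e(n)^{e(n)+1}$ dominate $2^{e(n)}$ for sufficiently large $n$. Summing this bound across the $k$ groups is a geometric series whose sum telescopes cleanly to $e(n)^k$, once the extra $+1$ for $x$ itself is absorbed into the leading term.

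The base case $k=0$ needs a separate unwinding of the definition: since no group ever fires, the constraint $\rank(y_i)\le\rank_{i-1}(x)=0$ together with the joint failure of the efficient and default clauses forces both $\umc$ and each $\rank(y_i)$ to collapse, driving the rank-$0$ subtree size down to the trivial bound $e(n)^0$. The cleanest way to carry this out is to fold the base case into the inductive step, exploiting the fact that for $k=0$ the two clauses being unfireable is itself a structural restriction.

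\textbf{Main obstacle.} The truly delicate case is the default clause. Naively, a single group contains $2^{e(n)}$ children while only buying a $+1$ in $\rank(x)$, so a crude summation would blow up to $2^{O(e(n)\cdot k)}$. The entire engineering of the rank definition is aimed at exactly this point: whenever the default clause fires, at most $e(n)-1$ members of the group can be efficient, so the bulk of them must be inefficient with rank shifted down by strictly more than $e(n)$, and by induction their subtrees shrink by a factor of $e(n)^{e(n)+1}$. Balancing this gap against the $2^{e(n)}$ count---which is what the choice $e(n)=2\fdc+1$ is tuned to do---is what keeps the aggregate at the claimed $e(n)^k$.
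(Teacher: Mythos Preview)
Your plan is essentially the paper's argument: both use induction on the rank, both split the children of $x$ into groups delimited by incremental children, and both rely on the single inequality $2^{e(n)}\le e(n)^{e(n)}$ (valid for large $n$, since $e(n)=2\,dc(n)+1=\omega(1)$) to control the ``default'' clause. The conceptual content matches.

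The one place where your organization diverges from the paper's is the final accounting, and this is exactly where your sketch is loose. You propose to bound the $r$th group's contribution by $e(n)\cdot e(n)^{r-1}+2^{e(n)}\cdot e(n)^{r-e(n)-2}=e(n)^{r}(1+o(1))$ and then ``sum the geometric series'' over $r=1,\ldots,k$. But that sum is $\frac{e(n)}{e(n)-1}\,e(n)^{k}(1+o(1))$, which overshoots $e(n)^{k}$ and does not telescope to it; there is no free $-e(n)^{r-1}$ term anywhere to cancel the excess. The paper avoids this by \emph{not} summing over all groups: it writes a single-step recurrence
\[
s_k \;\le\; (e(n)-1)\,s_{k-1} \;+\; 2^{e(n)}\,s_{k-e(n)-1},
\]
applies the inductive hypothesis once, and gets
\[
s_k \;\le\; e(n)^{k} \;+\; e(n)^{k-1}\bigl(2^{e(n)}e(n)^{-e(n)}-1\bigr)\;\le\; e(n)^{k}.
\]
The point is that the coefficient on $s_{k-1}$ is $e(n)-1$, not $e(n)$: the ``$-1$'' is what creates the slack that absorbs the $2^{e(n)}$ term and closes the induction exactly at $e(n)^k$. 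In your group-by-group accounting that slack never appears, because you are implicitly re-paying for the root's own rank-$(k-1)$ prefix inside the last group. If you recast your argument as a one-step recurrence --- peel off only the last group and invoke the inductive hypothesis on the rank-$(k-1)$ prefix as a single object --- you recover the paper's computation verbatim.

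Your remarks about the base case are more elaborate than needed. The paper simply takes $s_0=1$ and starts the induction there; there is no separate unwinding.
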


\begin{proof}
Let $s_k$ be the size of the maximum unmarked heap of rank $k$. Such a heap can be created from a maximal unmarked heap of rank $k-1$, which has been paired to $\rce-1$ maximal unmarked heaps of rank $k-1$ and $2^{\rcd}{}-(\rce-1)$ maximal unmarked heaps of rank $k-\rcd-1$. Thus:

$$s_k \leq (\rce-1) s_{k-1}+2^{\rcd} s_{k-\rcd-1}.$$

By induction, 

$$s_k \leq (\rce-1) \rce ^{k-1}+2^{\rcd} \rce ^{k-\rcd-1}%$. 
=
\rce ^{k} + \rce ^{k-1}(2^{\rcd} \rce ^{-\rcd}-1).$$ 

For sufficiently large $n$, $2^{\rcd} \rce ^{-\rcd}\leq 1$ (recall that $\rce=2 \fdc+1$ and $\fdc=\omega(1)$), thus

$$ \rce ^{k} + \rce ^{k-1}(2^{\rcd} \rce ^{-\rcd}-1)\leq  \rce ^{k} $$

which completes the lemma. 

\end{proof}

\begin{cor} \jlabel{c:rootrank}
If there are $m$ nodes in a node $x$'s unmarked induced subtree, the rank of $x$ is at least $\log_{\rce} {m} $.
\end{cor}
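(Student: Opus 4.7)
The plan is to obtain this corollary as the direct inverse of the preceding lemma, which established that any unmarked subtree induced by a node of rank $k$ contains at most $\rce^k$ nodes. Concretely, I would set $k = \rank(x)$ and let $m$ denote the size of the unmarked subtree induced by $x$; the previous lemma then reads $m \leq \rce^{\rank(x)}$. Taking $\log_{\rce}$ of both sides (which is monotone since $\rce = 2\fdc + 1 > 1$ for sufficiently large $n$, as $\fdc = \omega(1)$) yields $\log_{\rce} m \leq \rank(x)$, which is precisely the statement of the corollary.

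There is essentially no obstacle to this proof. The only subtle points worth mentioning are (i) that $\rce$ strictly exceeds $1$ so that the logarithm is well-defined and order-preserving, and (ii) that the preceding lemma's bound applies to an arbitrary node, not just a root, so it can legitimately be instantiated at the node $x$ under consideration. Both observations are immediate from the setup in \S\ref{sec:def}, so the proof can be written in a single line.
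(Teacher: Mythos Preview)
Your proposal is correct and matches the paper's treatment: the corollary is stated without proof immediately after the lemma bounding the unmarked subtree size by $\rce^k$, so the intended argument is exactly the contrapositive-by-taking-logarithms that you describe.
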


\begin{lemma}[{Number of efficiently linked children}]
Suppose node $v$ has rank $\geq k$ and at most $ \frac{k}{2}2^{\rcd} {}$ unmarked children. Then, $v$ has at least $k/2$ efficiently linked unmarked children, each having rank $<k$.
\end{lemma}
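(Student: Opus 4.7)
My plan is to split the first $k$ rank-increments of $v$ into ``efficient'' and ``default'' increments according to which branch of the rank recurrence triggered each, and then to argue that the default branch is expensive enough in unmarked children that, under the stated hypothesis, at most $k/2$ increments can come from it.

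Because $\rank_i(v)$ grows by either $0$ or $1$ at each step and $\rank(v) \geq k$, there is a smallest index $i^*$ with $\rank_{i^*}(v) = k$. Among $y_1, \ldots, y_{i^*}$ exactly $k$ indices are rank-incrementing; write $a$ for those triggered by the efficient case of the definition of $\rank_i$ and $b$ for those triggered by the default case, so $a+b = k$.

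The key step is bounding $b$. When the default case fires at index $i$, the definition forces $|\noninc(y_i)| = 2^{\rcd}$, and $\noninc(y_i)$ is by definition $y_i$ together with its maximal run of non-incremental left-siblings going back to the nearest incremental sibling. The crucial observation is that a default-case $y_i$ is itself incremental, so for two default-case indices $i < i'$ the backward scan defining $\noninc(y_{i'})$ cuts off no later than $y_i$; therefore $\noninc(y_{i'}) \subseteq \{y_j : i < j \leq i'\}$, and $\noninc(y_i) \cap \noninc(y_{i'}) = \emptyset$. Thus the $b$ default $\noninc$-sets are pairwise disjoint subsets of the unmarked children of $v$, which gives $b \cdot 2^{\rcd} \leq \tfrac{k}{2} \cdot 2^{\rcd}$ by hypothesis, hence $b \leq k/2$ and $a \geq k/2$.

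Finally I read off the children. Each efficient-case index $i \leq i^*$ supplies an efficiently linked unmarked child $y_i$, and $i \leq i^*$ forces $\rank_{i-1}(v) \leq k - 1$, so the efficient-linking inequality gives $\rank(y_i) \leq \rank_{i-1}(v) < k$. Distinct indices give distinct children, so the $a \geq k/2$ efficient-case indices yield the desired collection of efficiently linked unmarked children each of rank strictly less than $k$. The only real obstacle is stating the disjointness of the default $\noninc$-sets cleanly from the ``first incremental node going backward'' clause; after that, the counting $b \cdot 2^{\rcd} \leq \tfrac{k}{2} \cdot 2^{\rcd}$ and the rank bound $\rank(y_i) \leq \rank_{i-1}(v) < k$ are immediate.
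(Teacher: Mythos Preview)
Your proof is correct and follows the same approach as the paper: count rank increments, split them into efficient-case and default-case, and use the hypothesis on the number of unmarked children to bound the default-case increments via the sizes of the (disjoint) $\noninc$-sets. The paper's own proof is a one-line contradiction argument (``if fewer than $k/2$ efficiently linked children, then at least $k/2$ default-case increments, hence at least $\frac{k}{2}2^{\rcd}$ unmarked children''); you have simply unpacked this by making the disjointness of the default $\noninc$-sets explicit and by carefully extracting the rank bound $\rank(y_i) \leq \rank_{i-1}(v) < k$, both of which the paper leaves implicit.
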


\begin{proof} If there were less than $k/2$ efficiently linked children, than at least $k/2$ rank increments would be caused by the default case of the definition of rank and thus there would be at least $\frac{k}{2}{2}^{\rcd} {}$ unmarked children, a contradiction.
\end{proof}

\begin{cor}\jlabel{cor:effcountidea}
Consider a root with umarked subtree size $\rootsize$. 
The root has rank $\geq \log_{\rce } {\rootsize} $ by Corollary~\jref{c:rootrank}. Suppose it has $\leq \frac{\log_{\rce }{\rootsize}}{2}2^{\rcd} {}  $ unmarked children. Then it has $\geq \frac{1}{2} \log_{\rce } \rootsize$ efficiently linked unmarked children.
\end{cor}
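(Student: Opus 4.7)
The plan is to observe that this corollary is essentially an immediate specialization of the preceding ``Number of efficiently linked children'' lemma, with the appropriate choice of the parameter $k$. The only ingredients needed are Corollary~\ref{c:rootrank} to translate the subtree-size bound into a rank lower bound, and then a single application of the lemma. There is no real obstacle here; the statement is really just packaging the two previous results in the form that will be convenient for the rest of the proof.

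Concretely, I would set $k := \log_{\rce} \rootsize$. By Corollary~\ref{c:rootrank} applied to the root, its rank is at least $\log_{\rce} \rootsize = k$, so the first hypothesis of the ``Number of efficiently linked children'' lemma is satisfied. The assumption of the corollary states that the root has at most $\frac{\log_{\rce} \rootsize}{2} 2^{\rcd} = \frac{k}{2} 2^{\rcd}$ unmarked children, which is exactly the second hypothesis of that lemma. Invoking the lemma then yields that the root has at least $k/2 = \frac{1}{2} \log_{\rce} \rootsize$ efficiently linked unmarked children, which is the claimed conclusion. (The side conclusion that each such child has rank strictly less than $k$ is inherited from the lemma but is not needed for the statement of the corollary.)
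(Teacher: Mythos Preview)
Your proposal is correct and matches the paper's approach: the corollary is stated without a separate proof precisely because it is the immediate specialization of the preceding lemma with $k=\log_{\rce}\rootsize$, using Corollary~\ref{c:rootrank} to verify the rank hypothesis. Your observation that the side conclusion about the children's ranks is not needed here is also accurate.
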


Observe that:

\begin{align*}
 \frac{\log_{\rce }{\rootsize}}{2}2^{\rcd} {}
%=\frac{2^d \log m }{2\log \rce } 
%\geq \frac{2^d  \log m }{2\log \rce }
& = \frac{2^{\rcd-1}}{\log \rce}\cdot{ \log m }
\\ \intertext{and}
\frac{1}{2} \log_{\rce } \rootsize &=
%\frac{\log m}{2 \log (\rce )} 
%\geq
\frac{1}{2 \log \rce}\cdot {\log m}
\end{align*}

We now use these observations to restate Corollary~\jref{cor:effcountidea}. 
\end{fullonly}
Set $\rcf=\rcfvalue{\rcd}{\rce}$ and $\rcg=\rcgvalue{\rce}$. 
\shortfull{Using some technical lemmas in \S\ref{full:s:ufar} the appendix gives:}{Then:}

\begin{cor}\jlabel{cor:effcount}
Suppose a root with unmarked subtree of size $\rootsize$ has $\leq {\rcf  \log \rootsize}$ children. Then it has $\geq  {\rcg \log \rootsize}$ efficiently linked children. 
\end{cor}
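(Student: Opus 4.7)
The plan is to observe that Corollary \ref{cor:effcount} is essentially a cosmetic restatement of Corollary \ref{cor:effcountidea}, obtained by using the change-of-base identity $\log_{\rce} m = \frac{\log m}{\log \rce}$ to express both the hypothesis and the conclusion in terms of $\log m$ (base 2, say), and then collecting the remaining $\rcd$- and $\rce$-dependent factors into the named constants $\rcf$ and $\rcg$.

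Concretely, I would start from the two display lines immediately above the corollary statement, which already carry out the arithmetic: the hypothesis bound $\frac{\log_{\rce} m}{2}\,2^{\rcd}$ from Corollary \ref{cor:effcountidea} equals $\frac{2^{\rcd-1}}{\log \rce}\cdot \log m = \rcf \log m$ by the definition $\rcf = \rcfvalue{\rcd}{\rce}$, and the conclusion bound $\frac{1}{2}\log_{\rce} m$ equals $\frac{1}{2\log \rce}\cdot \log m = \rcg \log m$ by the definition $\rcg = \rcgvalue{\rce}$. So the corollary's hypothesis ``$\leq \rcf \log m$ unmarked children'' is literally the same numerical condition as Corollary \ref{cor:effcountidea}'s hypothesis, and its conclusion ``$\geq \rcg \log m$ efficiently linked children'' is literally the same numerical conclusion. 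Applying Corollary \ref{cor:effcountidea} to the root in question then gives the result directly.

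There is no serious obstacle; the only thing to be slightly careful about is that $\rcf$ and $\rcg$ are genuine positive constants (independent of $m$), which follows because $\rce = 2\fdc + 1$ and $\rcd$ are determined by the fixed algorithm $\alg$ and are not functions of the subtree size $m$. One should also note that since $\fdc = \omega(1)$, both $\rce$ and $2^{\rcd}$ grow with $n$, but for any fixed sufficiently large $n$ they are constants, so the linear-in-$\log m$ form of the bounds is exactly what the statement asserts.
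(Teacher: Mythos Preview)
Your proposal is correct and matches the paper's approach exactly: the paper states Corollary~\ref{cor:effcount} immediately after the two displayed change-of-base computations and the sentence ``We now use these observations to restate Corollary~\ref{cor:effcountidea}. Set $\rcf=\rcfvalue{\rcd}{\rce}$ and $\rcg=\rcgvalue{\rce}$,'' and gives no separate proof. Your observation that $\rcf$ and $\rcg$ depend on $n$ (via $\fdc$) but not on $\rootsize$ is the right way to read the statement.
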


We note that $\rcf =\rcfvalue{\rcd}{\rce}=\frac{2^{2\fdc}}{\log (2 \fdc+1)}=o(\log n )$ since $\fdc=o(\log \log n)$.

\subsection{Monotonic operation sequences} \jlabel{s:monotonic}

Call the \emph{designated minimum root} the next node to be removed in an \opEm.
\begin{fullonly}
\begin{definition}[Monotonic operation sequence] \jlabel{def:monotonic}
\end{fullonly}
Define a \emph{monotonic operation sequence} to be one where \opDc\ operations are only performed on roots, children of the designated minimum root, or marked nodes.
\begin{fullonly}
\end{definition}
\end{fullonly}
All of the sequences of operations we define will be monotonic. We will need the following two observations\shortfull{}{ about monotonic sequences}:

\begin{obs}[{Monotonic sequences and structure}] In a monotonic operation sequence, for any node $x$ with descendent $y$ where all nodes on the path from $x$'s child down to and including  $y$ are unmarked, $y$ will remain in the same location in $x$'s subtree until $x$ becomes the designated minimum root.
\end {obs}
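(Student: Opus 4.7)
The plan is a direct case analysis over the three priority-queue operations, confirming that none of them can relocate $y$ within $x$'s subtree while the stated hypothesis remains in force. Observe first that \opIns\ merely prepends a new leftmost root and changes nothing in the existing forest, so only \opDc\ (which detaches a node and makes it a leftmost root) and the \textsc{Pair} suboperations inside an \opEm\ can actually alter parent-child edges.

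For \opDc, the location of $y$ inside $x$'s subtree can be disturbed only if the node $z$ being decreased lies on the path from $x$'s child down to and including $y$. Every such $z$ is unmarked by hypothesis, so by Definition~\ref{def:monotonic} the \opDc\ on $z$ is legal only if $z$ is a root or a child of the current designated minimum root. It cannot be a root because it is a proper descendant of $x$. To be a child of the designated minimum root, the parent of $z$---which is either $x$ itself or another node strictly on the path above $z$---must be a root of the forest; since any node strictly below $x$ on the path is a proper descendant of $x$ and hence not a root, the parent must be $x$ and $x$ itself must be the designated minimum root. But this is precisely the terminal condition in the statement.

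For \opEm, the only structural suboperation is \textsc{Pair}, which takes two roots and attaches one as the leftmost child of the other. Thus pairings modify only root adjacencies---prepending a new leftmost child to some root, or demoting a root to become the leftmost child of another root---and they never touch an edge strictly inside a non-root subtree nor reorder any node's existing children. Hence no pairing can move $y$ within $x$'s subtree. The final removal of the designated minimum root at the end of an \opEm\ only dislodges $y$ when $x$ itself is the one being removed, which is again the termination condition. I expect the only genuinely subtle step to be the \opDc\ case, where one must carefully track which on-path nodes are permitted to be children of a designated minimum root; the argument ultimately hinges on the fact that the designated minimum root is always a root of the forest, so the only on-path parent that can host it is $x$ itself.
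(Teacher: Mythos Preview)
The paper states this as an observation with no accompanying proof, so there is nothing to compare against directly; your case analysis is the natural way to justify it, and your handling of \opIns, of \textsc{Pair} suboperations, and of the terminal root removal in \opEm\ is sound.

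There is, however, a gap in your \opDc\ case. You assert that the location of $y$ in $x$'s subtree can be disturbed only if the decreased node $z$ lies on the path from $x$'s child down to $y$. But \emph{location} in this paper means position counted from the right (see the structural invariants in \S\ref{s:ph}), so detaching a \emph{right sibling} $z$ of some path node $w$ shifts $w$'s index-from-the-right among its siblings and hence changes $y$'s recorded location, even though $z$ is off the path. Your argument that such a $z$ cannot be a root, and that its parent being the designated minimum root forces that parent to be $x$, carries over to this sibling case unchanged---but the remaining possibility, that $z$ is \emph{marked}, is not excluded by the hypothesis, which constrains only the nodes on the path itself. Closing this would require ruling out marked right siblings of path nodes, and that does not follow from the definitions alone: a node can win pairings (becoming marked), then lose a pairing to $p$, and then have $w$ paired under $p$ to its left. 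More likely the observation is meant to be read with ``location'' standing for the chain of parent--child edges from $x$ down to $y$; under that reading your proof is complete. The looseness is harmless for the paper's purposes, since the companion Observation~\ref{obs:monotonerank} on ranks---which is what is actually used downstream---depends only on the \emph{unmarked} subtree structure and is therefore insensitive to the detachment of marked siblings.
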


\begin{obs}[{Monotonic sequences and rank}]\jlabel{obs:monotonerank}
In a monotonic operation sequence, the rank of a node never decreases, from the time it is inserted until the time it becomes the designated minimum root.
\end{obs}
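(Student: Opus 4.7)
The plan is to show that while $x$ is alive and has not yet become the designated minimum root, $x$'s unmarked induced subtree can only grow---by the attachment of new unmarked children of $x$ via pairings when $x$ is a root---and to then appeal to the recursion $\rank_{i+1}(x)\in\{\rank_i(x),\rank_i(x)+1\}$ to conclude that $\rank(x)$ is non-decreasing.

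First I would classify, operation by operation, the possible effects on $x$'s unmarked induced subtree. An \opIns\ places a new isolated leftmost root, which cannot touch $x$'s subtree. A pairing inside an \opEm\ involves two roots, so it either (a) leaves $x$'s subtree untouched when neither paired root lies in $x$'s subtree, (b) moves $x$ as a whole subtree when $x$ is the larger of the two paired roots (changing $x$'s location but not its induced subtree), or (c) attaches some root $v$ as a new unmarked child of $x$ when $x$ is a root and is the smaller of the two paired roots; only case (c) alters $x$'s unmarked induced subtree, and does so purely by addition. For \opDc, the monotonicity hypothesis restricts the decreased node to be a root, a child of the designated minimum root, or already marked: a decreased root is not a descendant of $x$; a marked node is already detached from $x$'s unmarked view, so its removal does not alter that view; and a child of the designated minimum root $\mu\neq x$ is not a descendant of $x$, since $\mu$ is itself a root and the decreased node sits directly beneath $\mu$.

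The remaining subtlety is to rule out the possibility that an unmarked descendant of $x$ silently becomes marked, which would shrink $x$'s unmarked induced subtree. I would handle this by a joint induction asserting that every unmarked non-root $v$ (inserted earlier and not yet the designated min) has a stable unmarked induced subtree. The inductive step is immediate: cases (a) and (b) above do not change $v$'s subtree, case (c) is impossible for $v$ because $v$ is not a root, the \opDc\ case is handled exactly as before, and by the induction hypothesis no unmarked child of $v$ becomes newly marked. Hence the rank of every unmarked non-root is constant throughout its non-root unmarked lifetime, so no unmarked descendant of $x$ silently becomes marked.

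Putting the ingredients together, every transition of $x$'s unmarked induced subtree during the tracked interval is an addition of a new unmarked child $y_{k+1}$ coming from a pairing where $x$ is the smaller-key root, and by the rank recursion each such event changes $\rank(x)$ by $0$ or $+1$. Therefore $\rank(x)$ is non-decreasing from insertion until $x$ becomes the designated minimum root. The main obstacle is the joint induction ruling out silent re-marking of unmarked descendants; the rest is bookkeeping on which operations can disturb $x$'s unmarked view, together with an appeal to the preceding structural observation that unmarked paths are preserved under monotonic sequences.
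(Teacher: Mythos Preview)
Your argument is correct. The paper records this as an observation with no proof, relying implicitly on the preceding structural observation (unmarked paths are preserved under monotonic sequences) together with the monotone recursion $\rank_{i+1}(x)\in\{\rank_i(x),\rank_i(x)+1\}$. Your joint induction is exactly the piece that turns the implicit reasoning into a proof, namely the check that no unmarked non-root can silently become marked (which would otherwise shrink some ancestor's unmarked view). One small imprecision: in your case~(c) the attached root $v$ need not itself be unmarked---it may have won an earlier pairing in the same \opEm, become marked, and then lost this one---but if $v$ is marked the attachment does not alter $x$'s unmarked induced subtree at all, so the conclusion is unaffected.
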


\subsection{Augmented suboperation} \jlabel{s:augmented}

We augment \shortfull{}{suboperation~\jref{so:pair},} the \subop{Pair}\cdot\ operation, to return whether or not the rank was incremented as a result of the pairing. This augmentation does not give any more power to the pure heap model, since the exact ranks of all nodes is a function of the suboperation sequence and is thus known to the algorithm already.
We use this augmentation to create a finer notion of what constitutes a distinct sequence of suboperations. In particular, we will use the following fact:

\begin{lemma} \jlabel{lem:subdistinct}
Suppose $s_i$ and $s_j$ are two structurally distinct states of the data structure. Suppose a single valid sequence of suboperations implementing an \opEm\ is performed on both, and the outcomes of all augmented suboperations that have return values are identical in both structures. Then, the position of all nodes who have had their ranks changed is identical in both.
\end{lemma}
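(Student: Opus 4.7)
My plan is an induction on the number of suboperations executed, maintaining the invariant that after each step, every pointer $\pointer_k$ resolves to the same \emph{location} (in the sense of \S\ref{s:ph} --- a path of ``$a$th child from the right'' steps rooted at a ``$b$th root from the right'') in both states $s_i$ and $s_j$, even though the overall forest structures may differ. This matched-location invariant is the engine of the argument: whenever a \textsc{Pair} causes a rank to change, the affected node automatically sits at the same location in both executions.

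The base case is immediate since every pointer begins at the leftmost root. For the inductive step I would case-split on the suboperation type. Pure queries (\textsc{HasParent}, \textsc{HasChildren}, \textsc{HasLeftSibling}, \textsc{HasRightSibling}, \textsc{Compare}) neither move pointers nor alter structure, and their return values agree across $s_i$ and $s_j$ by hypothesis. The pointer-only movements (\textsc{MoveToParent}, \textsc{MoveToLeftmostChild}, \textsc{MoveToRightSibling}, \textsc{MoveToLeftSibling}, \textsc{Set}) are deterministic local transformations of the current location, so they preserve the invariant. Finally, \textsc{Pair}$(k,\ell)$ attaches the node at $\pointer_\ell$'s location as the leftmost child of the node at $\pointer_k$'s location; the updated locations of all pointers, and of every bystander node counted from the right, are a deterministic function of the pre-pair locations and so continue to agree in both states.

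To conclude, inside a single \opEm\ the only suboperation that can change a node's rank is \textsc{Pair} (the only one that alters unmarked subtree structure), and the augmented return value reports precisely when such a change occurs. Since return values agree, both executions change a rank at exactly the same subset of \textsc{Pair} steps, and in each such step the node whose rank changes is the one sitting at $\pointer_k$'s location at that moment --- the same location in both executions by the invariant. The invariant then carries this agreement forward through the rest of the suboperation sequence, so each rank-changed node occupies the identical final position in $s_i$ and $s_j$, which is the conclusion of the lemma.

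The main obstacle --- really a bookkeeping one --- is verifying that the ``from the right'' location description of an arbitrary node is stable under the two structural changes that \textsc{Pair} performs (giving a root a new leftmost child, and turning one root into a child of another), so that no bystander node's location description is silently mutated. Once this stability is pinned down, the induction goes through routinely.
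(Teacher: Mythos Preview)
Your inductive plan is a careful elaboration of what the paper dispatches in one sentence: the paper simply notes that a rank can change only when a \textsc{Pair} is executed, and that the augmented return value of \textsc{Pair} now records this explicitly, so the suboperation trace alone pins down where rank changes occur. The paper does not spell out the pointer-tracking induction you propose.

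One genuine issue with the invariant as you phrase it: tracking each $\pointer_k$ by its \emph{from-the-right} location does not survive the induction, and the obstruction is broader than the \textsc{Pair} bookkeeping you flag. It already fails at the base case --- if $s_i$ has $r_i$ roots and $s_j$ has $r_j\neq r_i$, the leftmost root is the $r_i$th-from-right in one and the $r_j$th-from-right in the other --- and it fails at \textsc{MoveToLeftmostChild} whenever the current node has different child counts in the two states (the leftmost child is the $c_i$th-from-right versus the $c_j$th-from-right). Nothing in the lemma's hypotheses rules out such mismatches. The invariant that \emph{does} go through is a from-the-left description (equivalently, the raw navigation trace itself): start at the first root from the left; \textsc{MoveToLeftmostChild} lands at the first child from the left; sibling moves shift a from-the-left index by one; and \textsc{Pair} re-indexes bystanders identically in the two executions because it is applied at matching from-the-left addresses of $\pointer_i$ and $\pointer_j$. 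With that adjustment the remainder of your argument goes through cleanly, and each rank-changed node is ``the winner of the $t$th \textsc{Pair}'' at an identical from-the-left address in both states.
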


\begin{fullonly}
\begin{proof}
The only time a node can have it's rank change is when something is paired to it, and this is now explicitly part of the return value of operation~\jref{so:pair}.
\end{proof}
\end{fullonly}

\subsection{Evolutions of indistinguishable sequences} \jlabel{s:evolution}

\subsubsection{Definitions} \jlabel{s:evolutiondefinition}

Let $\opB=\langle \opb_1, \opb_2, \ldots \rangle$   be a sequence of priority queue operations.
\begin{fullonly}

\end{fullonly}
Let $\opA(\opb_i)=\langle \opa^i_1, \opa^i_2, \ldots \rangle$ be the sequence of augmented suboperations and their return values used by algorithm \alg\ to execute operation $\opb_i$ if $\opb_i$ is an \opEm ; if it is not $\opA(\opb_i)$ is defined to be the empty sequence. $\opA(\opB)$ is the concatenation of $\opA(\opb_1), \opA(\opb_2), \ldots$.
\begin{fullonly}

\end{fullonly}
We call two sequences of priority queue operations $B$ and $B'$ \emph{algorithmically indistinguishable} if $A(B) =A(B')$, else they are \emph{algorithmically distinct}.
\begin{fullonly}

\end{fullonly}
Let $\state_{\opB}(i)$ be the structure of the heap after running sequence $\langle \opb_1\ldots \opb_i\rangle$; the \emph{terminal structure} of $B$ is $\state_{\opB}(|\opB|)$ which we denote as $\state_{\opB}$. Recall that by structure, we mean the raw shape of the heap without regard to the data in each node, but including which nodes are marked. Two sequences $\opB$ and $\opB'$ are \emph{terminal-structure indistinguishable} if $\state_B = \state_{B'}$, else they are \emph{terminal-structure distinct}.
Given a set of mutually algorithmically indistinguishable and terminal-structurally distinct (AI-TSD) sequences of heap operations $\Ids$, the \emph{distinctness} of the set, $\Distinct(\Ids)$ is defined to be $\log |\Ids|$. 
\begin{fullonly}

\end{fullonly}
{Note that having two sequences which are algorithmically indistinguishable does not imply anything about them being terminal-structure indistinguishable. For example, it may be possible to add a \opDc\ to a sequence, changing the terminal structure, while the sequence of suboperations performed to execute the sequence remains unchanged. (Recall that suboperations only occur during \opEm\ operations).}
\begin{fullonly}

\end{fullonly}
A critical observation is that the number of terminal-structurally distinct sequences is function of $n$; this is the basis for the contradiction at the end of the proof:

\begin{lemma} \jlabel{lem:maxdistinct}
The maximum distinctness of any set $\Ids$ of terminal-structurally distinct sequences, all of which have terminal structures of size $n$, is $\Distinct(\Ids)=O(n)$.
\proofapp{lem:maxdistinct}
\end{lemma}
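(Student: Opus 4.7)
The plan is a straightforward counting argument: bound the number of possible terminal structures on $n$ nodes, then take logarithms. Since the sequences in $\Ids$ are pairwise terminal-structure distinct, $|\Ids|$ is at most the number of distinct terminal structures on $n$ nodes, so $\Distinct(\Ids) = \log |\Ids|$ is at most the logarithm of that count.

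First I would recall what a terminal structure is: the raw shape of an ordered forest of rooted trees on $n$ nodes, together with a marked/unmarked bit on each node. (No keys or augmented data are part of the structure.) To count ordered forests on $n$ nodes, I would use the standard bijection with balanced parenthesis strings of length $2n$: traverse the forest in left-to-right depth-first order, emitting ``('' on entering a node and ``)'' on leaving it. This encoding is injective, so the number of ordered rooted forests on $n$ nodes is at most $2^{2n} = 4^n$. (One can sharpen this to the Catalan bound, but we do not need to.) Multiplying by $2^n$ for the choice of mark at each node gives at most $8^n$ distinct terminal structures on $n$ nodes.

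Therefore
\[
|\Ids| \;\le\; 8^n,
\]
and hence
\[
\Distinct(\Ids) \;=\; \log |\Ids| \;\le\; n \log 8 \;=\; O(n),
\]
as required.

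There is no real obstacle here: the lemma is purely an information-theoretic cap on how much ``distinctness'' can be packed into terminal structures of size $n$. The only thing to be careful about is to remember to include the marking bits when counting, since marks are part of the terminal structure per the definition in \S\ref{s:evolutiondefinition}; both the shape and the marking contribute only singly-exponentially many possibilities, and this is what drives the $O(n)$ bound.
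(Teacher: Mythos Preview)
Your proof is correct and essentially identical to the paper's: the paper bounds the number of ordered rooted forests on $n$ nodes by the Catalan number $C_n \le 4^n$, multiplies by $2^n$ for the markings, and takes the logarithm to get $O(n)$. Your $2^{2n}$ bound via the parenthesis encoding is the same thing without naming $C_n$.
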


\begin{fullonly}
\begin{proof}
The number of different shapes of a rooted ordered forest with $n$ nodes is $C_n$, the $n$\superscript{th} Catalan number. The number of different ways to mark some nodes in a forest with $n$ nodes is $2^n$. Since $C_n \leq 4^n$, the maximum number of distinct structures is at most $\log (2^n4^n) = O(n)$. 
\end{proof}
\end{fullonly}

\subsubsection{Evolving} \jlabel{s:evolving}

We will now describe several functions on AI-TSD sets of heap operations; we call such functions \emph{evolutions}. The general idea is to append individual heap operations or small sequences of heap operations to all sequences in the input set $\Ids$ and remove some of the resulting sequences so as to maintain the property that the sequences in the resultant set of sequences $\Ids'$ are algorithmically indistinguishable yet terminal-structure distinct. The evolutions will also have the property that if the time to execute all sequences in $\Ids$ is identical, then the runtime to execute all sequences in $\Ids'$ will also be identical. The difference in the runtime to execute sequences in $\Ids'$ versus those in $\Ids$ will be called the \emph{runtime} of an evolution.

\subsubsection{Insert evolution} \jlabel{s:einsert}

The \emph{insert evolution} has the following form: $\Ids' = \eIns(\Ids)$.
\begin{fullonly}

\end{fullonly}
In an insert evolution, a single \opIns\ operation of a key with value 0 is appended to the end of all $\Xi$ to obtain $\Xi'$. Given $\Ids$ is AI-TDS, the set $\Ids'$ is AI-TDS and trivially $\Distinct(\Ids)=\Distinct(\Ids')$. The runtime of the evolution is 1 since the added \opIns\ has runtime 1. The rank of the newly inserted node is 0, and is thus unmarked.

\subsubsection{Decrease-key evolution} \jlabel{s:edc}

The \emph{decrease-key evolution} has the following form: $\Ids' = \eDc(\Ids,p)$, where $p$ is a location which is either a root or a marked node in all terminal structures of sequences in $\Ids$. 
\begin{fullonly}

\end{fullonly}
In a \emph{\opDc\ evolution}, a \opDc$(p, \Delta x)$ operation is appended to the end of all sequences in  $\Xi$ to obtain $\Xi'$. 
The value of $\Delta x$ is chosen such that the new key value of what $p$ points to is set to is the negation of its current rank; this means $\Delta x$ is always nonnegative because of the monotone property of ranks noted in Observation~\jref{obs:monotonerank}. 
Observe that if $p$ points to a marked node, then it is unmarked after performing a \eDc. This requirement ensures that all structures that are distinct before this operation will remain distinct after the operation. Thus, the set $\Ids'$ is AI-TDS and trivially $\Distinct(\Ids)=\Distinct(\Ids')$. The runtime of the evolution is 1 since the added \opDc\ has runtime 1.

\subsubsection{Designated minimum root evolution} \jlabel{s:edmr}

The \emph{designated minimum root evolution} has the form $\Ids'=\eDmr(\Ids,r)$, where $\bigroot$ is the position of one root which exists in all terminal structures of $\Ids$.
\begin{fullonly}

\end{fullonly}
In a designated minimum root evolution, a \opDc\ operation on $\bigroot$ to a value of negative infinity is appended to all sequences in $\Ids$ to give $\Ids'$. It will always be the case that the (next) evolution performed on $\Ids'$ will be an \eEm\ evolution; the root $\bigroot$, which is known as the \emph{designated minimum root}, will be removed from all terminal structures of $\Ids'$ in this subsequent \eEm. There is no change in distinctness caused by this operation: $\Distinct(\Ids)=\Distinct(\Ids')$. The runtime of the evolution is 1 since the added \opDc\ has runtime 1.

\subsubsection{Extract-min evolution} \jlabel{s:em}

The \emph{extract-min evolution} has the form $(\Ids',\Vio,\acem)=\eEm(\Ids)$.
\begin{fullonly}

\end{fullonly}
The \eEm\ evolution is more complex than those evolutions previously described, and the derivation of $\Ids'$ from $\Ids$ is done in several steps. First, a \opEm\ operation is appended to the end of all sequences in $\Ids$ to obtain an intermediate set of sequences which we call $\tent{\Ids}$. Recall that the \opEm\ operation is implemented by a number of suboperations. There is no reason to assume that the suboperations executed by the algorithm in response to the \opEm\ in each of the elements of $\tent{\Ids}$ are the same; thus the set  $\tent{\Ids}$ may no longer be algorithmically indistinguishable. We fix this by removing selected sequences from the set $\tent{\Ids}$ so that the only ones that remain execute the appended \opEm\ by using identical sequences of supoperations. This is done by looking at the first suboperation executed in implementation of \opEm\ in each element of $\tent{\Ids}$, seeing which suboperation is the most common, and removing all those sequences  $\tent{\Ids}$ that do not use the most common first suboperation. If the suboperation is one which has a return value, the return value which is most common is selected and the remaining sequences are removed. This process is repeated for the second suboperation, etc., until the most common operation is \subop{End}{} and thus the end of all remaining suboperation sequences has been simultaneously reached. Since there are only a constant $\subops$ number of suboperations, and return values, if present, are boolean, at most a constant fraction (specifically $\frac{1}{2 \subops}$) of $\tent{\Ids}$ is removed while pruning each suboperation. At the end of processing each suboperation by pruning the number of sequences, the new set is returned as $\Ids'$. The set $\Ids'$ can be seen to be terminal-structure distinct, since pairing identically positioned roots in structurally different heaps, and having the same nodes win the pairings, can not make different structures the same. 
%\john{You might want to remind the reader what "position" means. Also, you
%previously use the term "location" for this; I would stick to one.}

Observe that the nodes winning pairings in the execution of the $\opEm$ might have their ranks increase, and thus become marked. \shortfull{By}{Moreover, due to} Lemma~\jref{lem:subdistinct}, the position of all such nodes is identical in all terminal structures of $\Ids'$. The set of the locations of these newly marked  nodes is returned as $\Vio$, the \emph{violation set}.

Now that it has been ensured that all of the sets of operations execute the appended \opEm\ using the same suboperations, we 
define $\acem$ to be this common number of suboperations used to implement the \opEm; this value is returned by the evolution. As each suboperation reduces the distinctness by at most a constant,
$\Distinct(\Ids') \geq \Distinct(\Ids)-\acem \log (2 \subops)=  \Distinct(\Ids)- O(\acem)$. The runtime of the evolution is $\acem$ since that is the cost of the added \opEm.

\subsubsection{Big/small evolution} \jlabel{s:bs}
%
%\john{I would rephrase the beginning of Section 5.6.7 -- instead of describing
%what big/small is classified by, state clearly what each one is.}
%
%
The \emph{big/small evolution} has the form $(\Ids',(p,bigsmall))=\eBs(\Ids)$.
\begin{fullonly}

\end{fullonly}
The goal of the big/small evolution is to ensure that the terminal structures of all sets are able to be executed in the same way in subsequent evolutions.
In a big/small evolution, each terminal structure of each of the operation sequences of $\Ids$ is classified according to the following, using the previously-defined function $\rcf$: 
\begin{itemize\shortfull{*}{}}

\item The exact number of roots if less than $\rcf  \log n$ or the fact that the number of roots is greater than $\rcf  \log n$ (we call this case \emph{many-roots}). 

\item If the exact number of roots is less than ${\rcf  \log n}$:
\begin{fullonly}
\begin{itemize}
\end{fullonly}
\shortfull{}{\item} The position of the root with the largest subtree (the leftmost such root if there is a tie). Call it $p$. Observe that the size of $p$'s subtree is at least 
$\cst$.
\shortfull{}{\item} The exact number of children of $p$ if less than ${\rcf  \log \cst}$ (we call this case \emph{small}) or the fact that the number of roots is greater than ${\rcf {} \log \cst}$ (we call this case \emph{root-with-many-children}).

\begin{fullonly}
\end{itemize}
\end{fullonly}

\end{itemize\shortfull{*}{}}

There are at most $\lceil {\rcf {} \log n} \rceil \cdot \lceil {\rcf {} \log \cst} \rceil$ possible classifications. We create set $\Ids'$ by removing from $\Ids$  sequences with all but the most common classification of their terminal structures. 
The return value is based on the resultant classification:
\shortfull{}{\begin{description}}
\shortfull{\emph{Many-roots:}}{\item[Many-roots:]} Return $(p,bigsmall)$ where $p=\text{\sc{NULL}}$ and $bigsmall=\text{\sc{Big}}$.
\shortfull{\emph{}Root-with-many-children:}{\item[Root-with-many-children:]} Return $(p,bigsmall)$ where $p$ is the location of the root with the largest subtree and $bigsmall=\text{\sc{Big}}$.
\shortfull{\emph{Small:}}{\item[Small:]} Return $(p,bigsmall)$ where $p$ is the location of the root with the largest subtree and and $bigsmall=\text{\sc{Small}}$.
\shortfull{}{\end{description}}

We bound the loss of distinctness, which is the logarithm of the number of classifications.
 Since $\rcf=o(\log n)$, then $\log \left( \lceil {\rcf {} \log n} \rceil \cdot \lceil {\rcf {} \log \cst} \rceil\right)= O(\log \log n)$, and thus $\Distinct(\Ids') = \Distinct(\Ids)- O(\log \log n)$. The runtime of the evolution is 0 since no operations are added to any sequence.

\subsubsection{Permutation evolution} \jlabel{ev:perm}

The \emph{permutation evolution} has the form $\Ids'=\ePerm (\Ids )$, where the leftmost root $\bigroot$ has in all terminal structures of the sequences of $\Ids$  a subtree size of at least $\frac{n  }{\rcf \log n}$ and at most ${\rcf  \log \cst}$ children; this will be achieved by being in the  small case of the big/small evolution and performing a decrease-key evolution on the relevant node.
It is also required that all terminal structures of sequences in $\Ids$ are entirely unmarked.
\begin{fullonly}

\end{fullonly}
In a permutation evolution, the goal is to increase the distinctness, and is the only evolution to increase the number of sequences\shortfull{.}{ in the process of converting $\Ids$ to $\Ids'$.} 

\shortfull{By}{Combining the preconditions of the permutation evolution with} corollary~\jref{cor:effcount}\shortfull{}{, yields the fact that} all nodes in the terminal structures of $\Ids$ at location $\bigroot$ have at least ${\rcg \log \cst}$ efficiently linked children; since there are at most $\rce $ efficiently linked children of each rank, \shortfull{}{that means} there are at least  $\frac{\rcg \log \cst}{\rce   }$ efficiently linked children of different ranks in each terminal structure. Find such a set and call it the \emph{permutable set\shortfull{ (PS)}{}}\shortfull{.}{(chose one arbitrarily if there is more than one possibility).} \shortfull{P}{Look at the position of these permutable sets in all terminal structures of the sequences of $\Ids$, and p}ick the position of the \shortfull{PS}{permutable set} that is most common. Form the intermediate set of sequences $\hat{\Ids}$ by removing from $\Ids$ all sequences that do not have this commonly located \shortfull{PS}{permutable set}.
An easy upper bound on the number of different locations \shortfull{PS}{permutable set}s could be in is
\shortfull{$}{$$} \overbrace{\frac{\rcg \log \cst}{\rce   }}^{\text{Size of permutable set}} \cdot  \overbrace{{\rcf  \log \cst}}^{\text{Number of children of $r$}}. \shortfull{$}{$$}
\begin{shortonly}
The reduction of distinctness is the logarithm of this, which, as $\rcf=o(\log n)$, $\rcg=o(1)$, $\rce=\omega(1)$, is
$\Distinct(\hat{\Ids})-\Distinct(\Ids)= -O(\log \log n)$.
\end{shortonly}
\begin{fullonly}
Thus this step decreases the distinctness of the set by at most the logarithm of the number of commonly located \shortfull{PS}{permutable set}s:  
\shortfull{$}{$$}\Distinct(\hat{\Ids})-\Distinct(\Ids)=   - 
O\left( \log \left( \frac{\rcg \log \cst}{\rce   } \cdot  {\rcf  \log \cst} \right) \right).\shortfull{$}{$$}
As $\rcf=o(\log n)$, $\rcg=o(1)$, $\rce=\omega(1)$, one can simplify this to
\shortfull{$}{$$}\Distinct(\hat{\Ids})-\Distinct(\Ids)=   -O(\log \log n).\shortfull{$}{$$}
\end{fullonly}

\shortfull{Using the definitions of $\rcf$ and $\fdc$, the \shortfull{PS}{permutable set} is of size}{The \shortfull{PS}{permutable set} is of size $\frac{\rcg \log \cst}{\rce   }$. Using the definitions of $\rcf$ and $\fdc$, this is} $\Theta\left(\frac{\log n}{\fdc \log \fdc}\right)$. Let $\rcm$ be a constant such that the \shortfull{PS}{permutable set} is of size at least $\frac{\rcm \log n}{\fdc \log \fdc}$ for sufficiently large $n$. Recall that in all the terminal structures
of the sequences of $\hat{\Ids}$, the ranks of the items in the permutable sets are different, and in fact are strictly increasing, when viewed right-to-left as children of $r$.

 We then create $\Ids'$ by replacing each sequence in ${\hat{\Ids}}$ with $\left(\frac{\rcm \log n}{\fdc \log \fdc}\right)!$ new sequences created by appending onto the end of each existing sequence a sequence of all possible permutations of \opDc\ operations on all elements of an arbitrary subset of size $\frac{m \log n}{\fdc \log \fdc}$ of the \shortfull{PS}{permutable set}.

The fact that all of the sequences in ${\hat{\Ids}}$ have the same \shortfull{PS}{permutable set}s ensures that all terminal structures in $\Ids'$ are terminal-structure distinct. 
 (Recall that Lemma~\jref{lem:structrank} says that different ranks imply different structures of induced subtrees). 
Thus, in this step distinctness increases by $\Distinct(\Ids')-\Distinct({\hat{\Ids}})=\log \left( \frac{\rcm \log n}{\fdc \log \fdc}  \right)! = \Theta\left( \frac{\log n \log \log n}{\fdc \log \fdc} \right)$\shortfull{, which dominates the total change of disctintness.}{.} 
\begin{fullonly}

Thus, combining all the steps of the permutation evolution bounds the total increase of distinctness by
\shortfull{$\Theta\left( \frac{\log n \log \log n}{\fdc \log \fdc} \right) $.}%
{
\begin{align*}
\Distinct(\Ids')-\Distinct(\Ids)&=
\Distinct(\Ids')-\Distinct(\hat{\Ids})+\Distinct(\hat{\Ids})-\Distinct(\Ids)\\
&=\Theta\left( \frac{\log n \log \log n}{\fdc \log \fdc} \right) -O(\log \log n)\\
&=\Theta\left( \frac{\log n \log \log n}{\fdc \log \fdc} \right) %\text{ since $ = o\left( \frac{\log \log n}{\log \log \log n} \right)$}
\end{align*}
}

\end{fullonly}
\shortfull{The evolution costs at most $\frac{\rcm \log n}{\fdc \log \fdc}$, since that is the number of unit-cost \opDc\ operations appended to the sequences.}%
{The cost of the evolution is at most $\frac{\rcm \log n}{\fdc \log \fdc}$, since that is the number of \opDc\ operations appended to the sequences, and these all have unit cost.}

\subsection{Rounds} \jlabel{s:rounds}

\begin{fullonly}
\begin{algorithm}
\caption{Algorithmic
 presentation of how evolutions are used to build the sequence of AI-TSD sequences 
 $\langle \Xi_0, \Xi_1, \Xi_2, \ldots \rangle$, which are split into rounds where the index of the start of round $i$ is $\circ_i$.
} 
\jlabel{a:evolve}
\begin{algorithmic}
\State $\Ids_0=\{\langle \overbrace{\opIns(0), \opIns(0), \ldots, \opIns(0)}^{n\text{ \opIns\ operations}} \rangle \}$
\State $i=0$
\State $round=0$
\State $\round_0=0$;
\Loop
\State $(\Ids_{i},p,bigsmall)= \eBs(\Ids_{i \pp})$; 
\If {$bigsmall=\text{\sc Small}$} 
\State $\Ids_{i} = \eDc(\Ids_{i \pp},p)$; \Comment{Small round}
\State $\Ids_{i}=\ePerm(\Ids_{i \pp})$; 
\Else 
\State $\Ids_{i}=\eDmr(\Ids_{i \pp},p)$; \Comment{Big round}
\EndIf
\State  $(\Ids_{i},\Vio,\acem)=\eEm(\Ids_{i \pp})$; \Comment{Common to small and big rounds}
\For{each $v$ in $\Vio$}
\State $\Ids_{i} = \eDc(\Ids_{i \pp},v)$;
\EndFor
\State $\Ids_{i} = \eIns(\Ids_{i \pp})$; 
\State $\round_{\pp round}=i$; 
\EndLoop
\end{algorithmic}

\end{algorithm}
\end{fullonly}

\shortfull{A sequence of evolutions $\Evo=\langle \evo_0, \evo_1,\ldots \rangle$ defines}
{We proceed to perform a sequence of evolutions $\Evo=\langle \evo_0, \evo_1, \ldots \rangle$ to define} a sequence of AI-TSD sets $\langle \Ids_0, \Ids_1, \ldots \rangle$. The initial set $\Ids_0$ consists of a single sequence of operations: the operation \opIns$(0)$, executed $n$ times. Each subsequent AI-TSD  set $\Ids_{i}$ is derived from  $\Ids_{i-1}$  by performing the single evolution $\evo_{i-1}$; thus in general $\Ids_{i}$ is composed of some of the sequences of $\Ids_{i-1}$ with some operations appended. 

These evolutions are split into \emph{rounds}; $\round_i$ is the index of the first AI-TSD set of the $i$th round. Thus round $i$ begins with AI-TSD set $\Ids_{\round_i}$ and ends with $\Ids_{\round_{i+1}-1}$ through the use of evolutions
$\langle \evo_{\round_{i}} \ldots \evo_{\round_{i+1}-1}  \rangle$
\begin{fullonly}

\end{fullonly}
 These rounds are constructed to maintain several invariants:
\shortfull{}{\begin{itemize}\item}
All terminal structures of all sequences in the AI-TSD set at the beginning and end of each round have size $n$. This holds as in each round, exactly one \opIns\ evolution and exactly one \opEm\ evolution is performed.
\shortfull{}{\item} All nodes in all terminal structures in the AI-TSD sets at the beginning and end of each round are unmarked.
\shortfull{}{\end{itemize}}

There are two types of rounds, \emph{big rounds} and \emph{small rounds}. 
At the beginning of both types of round a big/small evolution is performed\shortfull{ which }{. The return value of the big/small evolution} determines \shortfull{the round type.}{whether it will be a big or a small round.}
\begin{fullonly}

\end{fullonly}
The reader may refer to Algorithm~\ref{full:a:evolve} 
\begin{shortonly}
in the Appendix
\end{shortonly}
for a concise presentation of how evolutions are used to construct $\langle \Ids_0, \Ids_1, \Ids_2, \Ids_3, \ldots \rangle$. \shortfull{}{We now describe this process in detail.}

\subsubsection{The Big Round} \jlabel{s:br}

As the round begins, we know that the terminal structures of the AI-TSD set are entirely unmarked, and there are either at least ${\rcf  \log n}$ roots, or one root with at least ${\rcf {} \log \cst}$ children.  The round proceeds as follows:
\begin{fullonly}

\end{fullonly}
%\john{BigSmall $-O(\log \log n$)}
\begin{shortonly}
\begin{inparaenum}[(1)]
\end{shortonly}
\begin{fullonly}
\begin{enumerate}
\end{fullonly}
\item Perform a designated minimum root evolution on the root with largest subtree; this is the node $\bigroot$ from the big-small evolution whose location is encoded in the return value; as a result of the big-small evolution it is guaranteed to be in the same location in all of the terminal structures of the sequences of $\Ids$. %\john{No change}
\item Perform a \opEm\ evolution. % \john{Change of $-O(em)$}
\item For each item in the violation sequence returned by the \opEm\ evolution, perform a \opDc\ evolution. This makes the terminal structures of all heaps in $\Ids$ mark-free. %\john{Free}
\item Perform an \opIns\ evolution. %\john{Free}
\begin{shortonly}
\end{inparaenum}
\end{shortonly}
\begin{fullonly}
\end{enumerate}
\end{fullonly}

Assuming we are in round $i$, Let $\acem_i$ be the cost of the \opEm\ evolution, and let $\vs_i$ be the size of the violation sequence.
The cost of the round (the sum of the costs of the evolutions) is $\acem_i + \vs_i+2$, which is at least ${\rcf {} \log \cst}$, and based on the evolutions performed the distinctness can be bounded as follows: 
\shortfull{$}{$$}\Distinct_i-\Distinct_{i+1}=\overbrace{O(\acem_i)}^\opEm +\overbrace{O(\log \log n)}^\eBs . \shortfull{$}{$$}

%As $\vs_i \leq \acem_i$ \john{Why the log em here, should just be em; why the v here as these are just decrease key evolutions which have no loss of distintness}, the loss of distinctness is $O(\log \acem_i)$. \john{Should be a gain of $-O(\log \acem_i)-O(\log \log n)$}

\subsubsection{The Small Round} \jlabel{s:sr}

\shortfull{T}{In a small round, t}here is one root, call it $\akey$, at the same location in all terminal structures, with size at least $\frac{n  }{\rcf   \log n}$ and some identical number of children in all terminal structures which is at most ${\rcf {} \log \cst}$. The location of $\akey$ was returned by the big-small evolution. The round proceeds as follows:
%\john{BigSmall $-O(\log \log n$)}
\begin{shortonly}
\begin{inparaenum}[(1)]
\end{shortonly}
\begin{fullonly}
\begin{enumerate}
\end{fullonly}
\item Perform an \opDc\ evolution on $\akey$ to make it negative infinity. % \john{distinct: free}
\item  Perform a \ePerm\ evolution. %\john{ $\Omega\left( \frac{\log n \log \log n}{ \fdc \log \fdc} \right)$}
\item Perform an \opEm\ evolution. %\john{$-O(em)$}
\item For each item in the violation sequence returned by the \opEm\ evolution, perform a \opDc\ evolution. % \john{Free}
\item Perform an \opIns\ evolution. %\john{Free}
\begin{shortonly}
\end{inparaenum}
\end{shortonly}
\begin{fullonly}
\end{enumerate}
\end{fullonly}

Let $\acem_i$ be the actual cost of the \opEm,  let $\vs_i$ be the size of the violation sequence.
\begin{fullonly}

\end{fullonly}
The cost of the round is $\acem_i + \vs_i+2+\frac{\rcm \log n}{\fdc \log \fdc}$, and based on the evolutions performed the distinctness can be bounded as follows: 
%
%$$\Distinct_i-\Distinct_{i+1}=O(\log(em_i + v_i+1)) -\Omega\left( \frac{\log n \log \log n}{ \fdc \log \fdc} \right)=O(\log em_i)-\Omega\left( \frac{\log n \log \log n}{ \fdc \log \fdc} \right).$$ 
\shortfull{$}{$$}\Distinct_i-\Distinct_{i+1}=\overbrace{O(\acem_i)}^\opEm +\overbrace{O(\log \log n)}^\eBs-\overbrace{\Omega\left( \frac{\log n \log \log n}{ \fdc \log \fdc} \right)}^\ePerm\shortfull{$}{$$}

%\john{Again, em should not have a log and vi should not be there. Answer should be a gain of:  $\Omega\left( \frac{\log n \log \log n}{ \fdc \log \fdc} \right)-O(em)-O(\log \log n)$}

\subsection{Upper bound on time} \jlabel{s:ubot}

\begin{fullonly}
The following two crucial lemmas are needed in the next section.
\end{fullonly}
\begin{shortonly}
The following two crucial lemmas are needed in the next section and have proofs in \S\ref{full:s:ubot} in the appendix. 
\end{shortonly}
\begin{lemma}\jlabel{lem:totaltime}
The total time to execute any sequence in $\Ids_{\round_k}$ is $O(k \log n)$.
\end{lemma}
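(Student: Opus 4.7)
The plan is to bound the total actual cost $T$ of any sequence in $\Ids_{\round_k}$ by combining the amortized-cost guarantees on $\alg$ with a structural bound on the total number of markings. First I would tally operation counts via Algorithm~\ref{a:evolve}: a sequence in $\Ids_{\round_k}$ consists of the $n$ initial \opIns\ operations followed by $k$ rounds, where each round adds one \opIns, one \opEm, and at most $1+\frac{\rcm\log n}{\fdc\log\fdc}+v_i$ \opDc\ operations (small round) or at most $1+v_i$ (big round). Hence the total \opDc\ count is
\[
D \;\le\; O\!\left(\frac{k\log n}{\fdc\log\fdc}\right) + V, \qquad V = \sum_i v_i,
\]
and the total counts of \opIns\ and \opEm\ are $n+k$ and $k$ respectively.

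Applying the amortized-cost assumption for $\alg$, the total actual cost satisfies
\[
T \;\le\; (n+2k)\cem\log n + D\fdc \;\le\; O\bigl((n+k)\log n\bigr) + \fdc V,
\]
after absorbing $O(k\log n/\log\fdc) = o(k\log n)$ into the dominant term. It remains to control $\fdc V$.

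The critical step is the structural bound. Here I would invoke the rank definition: each newly marked node in an \opEm\ corresponds to the completion of a noninc block of size at least $\rce = 2\fdc + 1$. Since no \opDc\ interleaves inside an \opEm\ (\opDc's occur only in the \eDc, \eDmr, and \ePerm evolutions between consecutive \opEm\ evolutions), successive rank-increasing pairings at a single node consume disjoint noninc blocks of freshly acquired children, each acquisition being a suboperation of that \opEm. Summing across nodes and rounds gives $V \le \sum_i em_i/\rce \le T/\rce$, whence $\fdc V \le T\fdc/(2\fdc+1) < T/2$. Substituting back, $T \le O((n+k)\log n) + T/2$, which rearranges to $T = O((n+k)\log n)$. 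In the asymptotic regime of the eventual contradiction, where $k$ is large enough that $n = O(k\log n)$, this is the claimed $T = O(k\log n)$.

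The main obstacle is the structural inequality $V \le T/\rce$: the noninc-block counting must handle the ``first'' rank-increase at each node within an \opEm, whose noninc block may contain up to $\rce-1$ children that existed before the \opEm\ began. The invariant that no nodes are marked at the start of a round ensures these pre-existing children are cleanly identifiable; summed over distinct (node, round) pairs, the slack from first-marks totals at most $n$, which is absorbed in the $O(n\log n)$ portion of the amortized bound, leaving the factor-of-$\rce$ savings on subsequent marks to close the estimate.
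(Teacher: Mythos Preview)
Your approach is the same as the paper's: combine the amortized upper bound with the structural inequality $\sum_i v_i \le T/\rce$ to obtain a self-referential inequality for $T$ that rearranges to the desired bound. The paper carries out the algebra a bit differently (it first isolates $\sum v_i$ and substitutes into the amortized bound before solving for $\runb$), but the skeleton is identical.

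Two remarks. First, your ``main obstacle'' paragraph is overcomplicating the structural step. The bound $\sum_i v_i \le T/\rce$ needs no per-round accounting and hence no slack correction: argue \emph{globally}. Every rank increase at any node, in any round, corresponds to a completed noninc block of at least $\rce$ distinct unmarked children, and each child (equivalently, each pairing) lies in at most one such block over the entire execution, since an incremental child permanently terminates the block to its right. Hence the total number of rank increases over all rounds is at most the total number of pairings divided by $\rce$, and $\sum_i v_i$ is bounded by the former. This is exactly the paper's one-line justification (``its rank can only increase after winning $\rce$ pairings''); your worry about pre-existing children and the claimed ``slack at most $n$'' is unnecessary and, as stated, not obviously correct. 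Second, you are right to carry the $n$ initial insertions through the accounting and obtain $O((n+k)\log n)$; the paper's displayed equations silently drop both their actual and amortized cost. This is harmless for the use of the lemma in \S\ref{s:pit}, where $k$ is taken arbitrarily large, and your final sentence handles it correctly.
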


\begin{fullonly}

\begin{proof}
Let $\opB$ be a sequence in $\Ids_{\round_k}$, and let $\runb$ be the time to execute $B$.

Let $\dcc_i$ be the cost of the permutation evolution in round $i$; this is at most $ \frac{\rcm \log n}{\fdc \log \fdc}$% \johnd{must fox, should not cause problems; fixed}
 if round $i$ is a small round and 0 if round $i$ is a big round (recall that permutation evolutions are only performed in the small round).
Thus the cost for any round $i$, whether big (\S\jref{s:br}) or small (\S\jref{s:sr}), can be expressed as $2+\dcc_i +    \acem_i +  \vs_i$.
The time to execute any sequence $\opB \in \Ids_{\round_k}$, which we denote as \runb, is the sum of the costs of the rounds:

\begin{equation} \jlabel{eq1}
 \runb =  \sum_{i=1}^k (2+\dcc_i +    \acem_i +  \vs_i) 
 \end{equation}

An item in the violation sequence has had its rank increase. Its rank can only increase after winning $\rce$ pairings. Pairings are operations. Thus,

\begin{align}
 \sum_{i=1}^k \vs_i 
 &\leq \frac{\runb}{\rce }  \jlabel{eq2}
\\
\intertext{Combining~\jeqref{eq1} and~\jeqref{eq2} and rearranging gives:}
\sum_{i=1}^k \vs_i &\leq \frac{1}{\rce}{\sum_{i=1}^k (2+\dcc_i +    \acem_i +  \vs_i)}
\nonumber
\\
\left(1-\frac{1}{\rce} \right)
\sum_{i=1}^k \vs_i &\leq \frac{1}{\rce}{\sum_{i=1}^k (2+\dcc_i +    \acem_i )}
\nonumber
\\
\sum_{i=1}^k \vs_i &\leq  \frac{1}{\rce -1}\sum_{i=1}^k (2+\dcc_i +   \acem_i )
\jlabel{eq3}
\end{align}

We know by assumption that the runtime is bounded by the sum of the amortized costs stated at the beginning of \S\jref{sec:sor}. Each round has one \opIns\ and one \opEm\ (at an amortized cost of $\cem \log n$ each) and $1+dc_i + \vs_i$ \opDc\ operations (at an amortized cost of $\LL{\fdc}$ each). This gives:

\begin{equation} \jlabel{eq4}
\runb\leq \sum_{i=1}^k (  (\dcc_i + \vs_i +1 ) {\fdc}  + 2 \cem \log n )
\end{equation}

%\johnc{Need to carefully edit math below... done}

\begin{align*}
\intertext{Combining \jeqref{eq3} and \jeqref{eq4} gives:} \jlabel{eq5}
\runb & \leq \sum_{i=1}^k \left[ \left(  \dcc_i  +
\frac{2+\dcc_i +   \acem_i }{\rce -1} +1\right) \fdc
   + 2 \cem \log n \right]
\\
\intertext{Since $\dcc_i \leq \frac{\rcm\log n}{\fdc \log \fdc}$ and 
 $\frac{\rcm\log n}{\fdc \log \fdc} \leq \frac{\log n}{\fdc}$ for sufficiently large $n$ (recall that $\rcm$ is a constant defined in \S\jref{ev:perm}).}
\runb &\leq \sum_{i=1}^k \left[ \left( 
\frac{2 }{\rce -1} +1 \right) \fdc +
 \frac{\acem_i \fdc}{\rce-1} 
   + \left( 1+2 \cem+\frac{1}{\rce-1} \right) \log n \right]
\\
\runb&\leq 
k \log n
\left[
\left(  \frac{2 }{\rce -1}+1 \right) \frac{\fdc}{\log n} 
   + \left( 1+2 \cem+\frac{1}{\rce-1} \right) 
\right]
+
\frac{\fdc}{\rce-1}
\sum_{i=1}^k 
\acem_i 
\\
\intertext{Since $\sum_{i=1}^k 
   \acem_i \leq \runb$}
\runb&\leq 
k \log n
\left[
\left(  \frac{2 }{\rce -1} +1\right) \frac{\fdc}{\log n} 
   + \left( 1+2 \cem+\frac{1}{\rce-1} \right) 
\right]
+
\frac{\fdc}{\rce-1}
\runb
\\
\runb
\left(
1-\frac{\fdc}{\rce-1}
\right)
&\leq 
k \log n
\left[
\left(  \frac{2 }{\rce -1} +1\right) \frac{\fdc}{\log n} 
   + \left(1+ 2 \cem+\frac{1}{\rce-1} \right) 
\right]
\\
\intertext{Substituting in the definition of $\rce$: $\rce=2\fdc+1$}
\runb
\left(
1-\frac{1}{2}
\right)
&\leq 
k \log n
\left[
\left(  \frac{1 }{\fdc}+1 \right) \frac{\fdc}{\log n} 
   + \left(1+ 2 \cem+\frac{1}{2\fdc} \right) 
\right]
\\
\runb
&\leq 
2k \log n
\left[
\frac{1 }{\log n} 
+
\frac{\fdc}{\log n} +
1+
2 \cem+
    \frac{1}{2\fdc} 
\right]
\intertext{For large enough $n$, $\frac{\fdc}{\log n} +
\frac{1 }{\log n} 
   + 
    \frac{1}{\fdc}<1$, so}
\runb
&\leq 
2(2\cem+2)k \log n
\end{align*}

Since $\cem$ is a constant, this concludes the proof.
\end{proof}

\end{fullonly}

\begin{lemma}\jlabel{lem:fracsmall}
After $k$ rounds, at least $\frac{k}{2}$ rounds must be small rounds.
\end{lemma}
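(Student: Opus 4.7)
The plan is a pigeonhole argument that pits the total-cost upper bound from Lemma~\ref{lem:totaltime} against the per-round cost lower bound recorded in \S\ref{s:br} for big rounds. Since each big round contributes at least $\rcf \log \cst$ to the runtime while the entire $k$-round budget is only $O(k \log n)$, the number of big rounds is forced to be $O(k/\rcf)$; since $\rcf = \omega(1)$, for $n$ large enough this is at most $k/2$.

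In more detail, let $B$ denote the number of big rounds among the first $k$ and fix any sequence $\opB \in \Ids_{\round_k}$. Every per-round cost is nonnegative, and each big round $i$ satisfies $\acem_i + \vs_i + 2 \geq \rcf \log \cst$ (the bound recorded at the end of \S\ref{s:br}, which splits on whether the round begins with $\geq \rcf \log n$ roots or with one root having $\geq \rcf \log \cst$ children). Summing the big-round costs and discarding the small-round contributions gives
\[
\runb \;\geq\; B \cdot \rcf \log \cst.
\]
Combining this with the upper bound $\runb \leq 2(2\cem+2)\, k \log n$ from Lemma~\ref{lem:totaltime}, and using $\log \cst = \log n - \log(\rcf \log n) \geq \tfrac{1}{2}\log n$ for all sufficiently large $n$ (since $\rcf = o(\log n)$), yields $B \leq \frac{4(2\cem+2)\, k}{\rcf}$. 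Finally, the definition $\rcf = \frac{2^{2\fdc}}{\log(2\fdc+1)}$ noted after Corollary~\ref{cor:effcount}, together with $\fdc = \omega(1)$, gives $\rcf = \omega(1)$, so for all sufficiently large $n$ we have $\rcf \geq 8(2\cem+2)$ and hence $B \leq k/2$. Therefore at least $k/2$ of the first $k$ rounds are small.

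The only step that requires any scrutiny is invoking the per-round lower bound $\acem_i + \vs_i + 2 \geq \rcf \log \cst$ in both subcases of a big round: in the many-roots subcase it counts the pairing suboperations needed to merge $\geq \rcf \log n$ roots into a monoarboral forest, while the root-with-many-children subcase is already recorded in \S\ref{s:br}. Once that bound is granted, the remainder of the lemma is a single asymptotic calculation driven by $\rcf = \omega(1)$.
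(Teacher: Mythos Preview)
Your argument is correct and is essentially the paper's own proof: both compare the per-big-round lower bound $\rcf \log \cst$ from \S\ref{s:br} against the $O(k\log n)$ total-cost upper bound of Lemma~\ref{lem:totaltime}, and then use $\rcf=\omega(1)$ to cap the number of big rounds at $k/2$. The paper phrases it as a contradiction and expands the asymptotics of $\rcf \log\cst$ a bit further, while you bound $B$ directly and simplify via $\log\cst\geq \tfrac12\log n$; these are cosmetic differences only.
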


\begin{fullonly}
\begin{proof}
Proof is by contradiction. Suppose more than $\frac{k}{2}$ rounds are big rounds.
The actual cost of a big round is at least ${\rcf {} \log \cst}$, so the total actual cost is greater than:

\begin{align*}
\runb & \geq \frac{k \rcf}{2} \log \cst \\
& = k \frac{2^{\rcd-1}}{2 \log \rce} \log \frac{n \log \rce}{2^{\rcd-1} \log n}
& \text{since } \rcf=\frac{2^{\rcd-1}}{\log \rce}\\
& =k \frac{2^{2\fdc}}{2 \log (2 \fdc+1)} \log \frac{n \log (2 \fdc +1)}{2^{2\fdc} \log n}
& \text{since } \rcd=2\fdc+1\\
& =  \Omega (k {2^{2\fdc-\log \log \fdc}} \log n)
\end{align*}

Since $\fdc=\omega(1)$, this contradicts Lemma~\jref{lem:totaltime}, that the runtime is $O(k \log n)$.
\end{proof}
\end{fullonly}

\subsection{Putting it together} \jlabel{s:pit}

The gain of distinctness of a round has been bounded as follows:
\begin{equation}
 \Distinct_{\round_{i+1}}-\Distinct_{\round_i} =
 \begin{cases}
-O(\acem_i)-O(\log \log n)  & \text{If the $i$th round is a big round (\S\jref{s:br})}\\
-O(\acem_i)-O(\log \log n) + \Theta \left( \frac{\log n \log \log n}{\fdc \log \fdc} \right)
&\text{If the $i$th round is a small round (\S\jref{s:sr})}
\end{cases}\jlabel{eq5}
\end{equation}
%
%\john{Above and below should be only $\acem_i$, but I think it still works.}
Now we know that $\sum_{i=1}^k (\acem_i)$ is less than the actual cost to execute a sequence in $\Ids_{\round_k}$, which is $O(k {} \log n )$ by Lemma~\jref{lem:totaltime}. Substituting $\sum_{i=1}^k \acem_i = O(k {} \log n )$ into \jeqref{eq5}  and using the fact from Lemma~\jref{lem:fracsmall} that at least half of the rounds are small rounds gives:
\shortfull{$}{\begin{equation}} \jlabel{eq6} \Distinct_{\round_k}-\Distinct_{\round_0} =  \Theta \left( k\log n \frac{\log \log n}{\fdc \log \fdc} \right)-O(k \log \log n)- O(k  \log n )  
\shortfull{$}{\end{equation}}
Since $\fdc=o\left( \frac{\log \log n}{\log \log \log n} \right)$, 
$\frac{\log \log n}{\fdc \log \fdc} =\omega(1)$, and thus
the negative terms in\shortfull{ the previous equation}{~\jeqref{eq6}} can be absorbed, giving:
\shortfull{$}{$$} \Distinct_{\round_k}-\Distinct_{\round_0} = \Theta \left( \frac{k\log n \log \log n}{\fdc \log \fdc} \right).
\shortfull{$}{$$}
But after sufficiently many rounds (i.e.~sufficiently large $k$) this contradicts Lemma~\jref{lem:maxdistinct} that for all $i$, $\Distinct_i = O(n)$. Thus for sufficiently large $k$ and $n$ a contradiction has been obtained, proving Theorem~\jref{th:main}.

\begin{fullonly}
\section{The sort heap}
\jlabel{s:sortheap}

\subsection{Purpose}

All heaps data structures that support $O(\log \log n)$ \opDc\ require either augmented data (e.g.~Fibonacci heaps and variants), or 
an implementation of \opDc\ that places them outside of both our and Fredman's \cite{DBLP:journals/jacm/Fredman99} models for lower bounds. In this section we describe a new and simple structure, which we call the \emph{sort heap}, which has no augmented data and for which our lower bound (but not Fredman's) applies. 
Our structure features $O(\log \log n)$ \opIns\ and \opDc, and $O(\log n \log \log n)$ \opEm.
When the ratio of \opDc\ operations to \opEm\ operations is $\Omega(\log n)$, \opDc\ is the dominant operation in the amortized runtime.

Thus the sort heap is the first self adjusting heap with $o(2^{2 \log \log n})$ \opDc\ in a model with lower bounds for \opDc; 
in our model the $O(\log \log n)$ amortized runtime for \opDc\ is within a $O(\log \log \log n)$ factor of the
$\Omega \left( \frac{\log \log n}{\log \log \log n} \right)$ amortized lower bound.
The sort heap only differs from the pairing heap in the choice of pairings performed in the \opEm\ operation.

The potential function used to analyze the sort heap is very different from that used to analyze Fibonacci heaps, pairing heaps, and their variants. The potential of a node is in the range $0 \ldots \Theta(\log \log n)$, which is a much smaller range than that used in all other potential functions. 
(A smaller range of a potential function indicates that the analysis can be applied to smaller sequences. The size of
a pontential function's range plays a role in some deamortization transformations; see~e.g.~\cite{DBLP:conf/icalp/BoseCFL12}).
Secondly, and perhaps more interestingly, the potential function is not logarithmic in nature. Previous functions were dominated by logarithms of some function of the subtree size of the node and perhaps the parent of the node. Our potential function has a dependence on subtree size, but it is linear. It is inspired by  the potential function used to analyze skew heaps \cite{DBLP:journals/jacm/SleatorT85}, a heap that predates all heaps with $o(\log n)$ $\opDc$. 

\subsection{Implementation of operations}

The data structure is a min-heap-ordered general tree. 

\begin{description}

\item[$p=$\opIns$(x)$:] Create a new node with key $x$ and adds it as a new leftmost heap in the forest of heaps. Returns a pointer $p$ to the new node.

\item[\opDc$(p,\Delta x)$:] Detaches the node pointed to by $p$ from its parent (if it is not a root), decreases the key value by the nonnegative value~$\Delta x$, and places it as the root of the leftmost heap in the forest of heaps.

\item[$x=$\opEm$()$:] 
 Let $k$ be the total number of roots in the tree. These $k$ roots now must be combined using the pairing operation. These roots are grouped into $\left \lceil \frac{k}{\consd d \log n} \right \rceil$ groups of $ \consd \log n$  nodes\footnote{The constant \consd\ is chosen for convenience; any constant strictly larger than $\frac{2}{\log \frac32}$ will work as will be seen in the proof of Lemma~\jref{l:em}.}; the last group might not be full. Within each group, the key values in the roots are sorted and then paired in order from largest to smallest. This leaves one heap per group; these heaps are combined via pairings of their roots arbitrarily until a single heap remains. Finally, now the the structure is a single heap, the root is removed and returned, leaving behind a forest of heaps with its children as the new roots.
 
\end{description}

\paragraph{Comparison to the Fibonacci heap.} 
The primary difference between the sort heap and the Fibonacci heap (and the pointer Fibonacci heap) is that in the implementation of \opEm, Fibonacci heaps sort according to subtree size and not according to key value. In the Fibonacci heap, a rough approximation of logarithm of the subtree size is stored explicitly as an augmented field (called the \emph{rank}, whose definition is not the same as the use of the word rank in this work). Fibonacci heaps also have the notion of \emph{marked nodes} and \emph{cascading cuts}; these notions add complexity to the implementation of the \opDc\ operation in some cases.
%\john{When comparing against Fibonacci heaps, you could mention that the sort
%heap's decrease operation is similar to rank-pairing heaps (but there is no
%data to be augmented up the tree).}

\paragraph{Comparison to the pairing heap.} 
The only difference between the sort heap and the pairing heap is the choice of pairings performed in \opEm. Otherwise, they are basically identical.

\paragraph{Comparison to Elmasry's heap.} 
The only difference between the sort heap and Elmasry's heap is the implementation of \opDc. Elmasry's heap has an implementation of \opDc\ where the leftmost child of the node having its key decreased is removed and is attached in the place of its former parent. The sort heap does not have this leave-one-child-behind behavior; all children of the node being decrease-keyed remain attached.

So, the sort heap takes the idea of Elmasry that sorting on keys is a good substitute for the Fibonacci heap's sorting on subtree sizes, and uses the pairing heap's implementation of \opDc, which is the simplest.

\subsection{Potential}

The potential method is used to bound the amortized runtimes of the operations of the sort heap.

\subsubsection{Preliminaries \checked}

Let $S(x)$ be the number of nodes in the induced subtree of $x$ and its right siblings. Let $L(x)$ be number of nodes in the induced subtree of $x$, and let $R(x)$ be the number of nodes in the induced subtrees of the right siblings of $x$. These definitions are made so that $S(x)=L(x)+R(x)$. 

We call a node \emph{right heavy} if $R(x) \geq \frac{2}{3}S(x)$ and \emph{left heavy} if $L(x) \geq \frac{2}{3}S(x)$. A node that is neither left heavy nor right heavy is called \emph{transitional}; transitional nodes have $L(x)$ and $R(x)$ in the range $(\frac{1}{3}S(x)..\frac{2}{3}S(x))$.

\subsubsection{Potential function \checked}

Given these definitions, we can now define the potential of a node. For ease of presentation, we will first define the \emph{raw potential} of a node $\zeta(x)$, and then define the \emph{potential} of a node, $\varphi(x)$, to be $\zeta(x) c \log \log n$, where $c$ is a sufficiently large constant chosen, where sufficiently large will be defined in the proof of Lemma~\jref{l:em}. This allows us to avoid the clutter of $c \log \log n$ terms when presenting some preliminary lemmas.

 Right heavy nodes have raw potential $1$, and left heavy nodes have raw potential $0$. Transitional nodes have their raw potential linearly vary from $0$ to $1$, specifically, if $x$ is transitional 

$$\zeta(x)=\frac{R(x)-\frac{1}{3}S(x)}{\frac{1}{3}S(x)}.$$

For example, the raw potential is $\frac{1}{2}$ when $R(x)=L(x)=\frac{1}{2}S(x)$.

Observe that the potential of a node varies from~0 to $c\log \log n$. This is a smaller range than that used to analyze other pairing-heap type structures. The original analysis of pairing heaps \cite{DBLP:journals/algorithmica/FredmanSST86} and Elmasry's heaps %
\cite{DBLP:conf/soda/Elmasry09} vary up to $\log n$, while Pettie's analysis of pairing heaps can vary by $2^{\Omega(\log \log n)}$ \cite{DBLP:conf/focs/Pettie05}.

The potential of a heap is the sum of the potentials of the nodes of the heap. 

\subsubsection{Properties of the raw potential}

In this section several facts about the raw potential are presented that form the backbone of the analysis of the structure.

\begin{obs} \checked \jlabel{obs:one}
A single pairing only changes the raw potential of the two nodes directly involved in the pairing; their raw potentials can only change by one.
\end{obs}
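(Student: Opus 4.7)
My plan is a direct case analysis of how the three quantities $L(z), R(z), S(z)$ are affected, for each node $z$ in the forest, by a single pairing. Let $x$ be the winner and $y$ the loser, so that after the pairing $y$ is detached from the root list and installed as the new leftmost child of $x$, carrying its subtree with it. I would restrict attention to pairings of adjacent roots, which is the only kind performed by the sort heap's \opEm\ (within a sorted group the pairings proceed right-to-left between adjacent sorted roots, and any further inter-group combining can be scheduled to act on adjacent roots as well).

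For each node $z \notin \{x,y\}$, I would show that $L(z)$, $R(z)$, and $S(z)$ are all unchanged, from which $\zeta(z)$ is unchanged since $\zeta$ is defined purely in terms of those three quantities. If $z$ is a proper descendant of $x$ or of $y$, the induced subtree of $z$ is unchanged and $z$'s list of right siblings is unchanged: the subtree of $y$ moves as a rigid block, and the only new child of $x$ is $y$, placed at the far-left position, which does not alter any existing child's right-sibling subtree total. If $z$ is a root different from $x$ and $y$, then only the root list around $z$ is affected; but the total right-sibling subtree size of $z$ is preserved, since the nodes that disappear from the explicit root list (namely $y$ and its former subtree) reappear inside the right sibling $x$ of $z$ when $z$ sits to the left of the pair, and were never right siblings of $z$ when $z$ sits to the right of the pair.

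For the second clause, the bound of at most $1$ on the change at $x$ and $y$ is immediate from the fact that $\zeta$ is by definition confined to $[0,1]$: left-heavy nodes are assigned $\zeta=0$, right-heavy nodes are assigned $\zeta=1$, and transitional nodes are given a linear interpolation strictly between. Thus no matter how dramatically $L$, $R$, or $S$ of $x$ or of $y$ shift during the pairing (for instance $L(x)$ jumps by $L(y)$, and $y$'s right-sibling set is replaced wholesale by the former child list of $x$), the value $\zeta$ can move by at most $1$ in absolute terms.

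The main point of care is the sibling accounting for roots lying to the left of the paired pair: one has to notice that the nodes vanishing from the explicit root list reappear inside the enlarged subtree of $x$, so $R(z)$ is conserved exactly rather than merely approximately. Once this is observed, everything else is a mechanical check.
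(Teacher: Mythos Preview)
The paper offers no proof: the statement is presented as an observation, implicitly immediate from the definition of $\zeta$ in terms of $L$, $R$, $S$. Your case analysis is exactly the natural verification, and the second clause is indeed trivial from $\zeta\in[0,1]$.

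One caveat on the first clause. Your argument for roots $z\notin\{x,y\}$ depends on $x$ and $y$ being \emph{adjacent} in the root list, and you assert that the sort heap only performs such pairings. But sorting the keys within a group does not physically rearrange the roots in the forest---there is no such primitive in the pure heap model---so pairing in sorted-key order will in general link non-adjacent roots. For a root $z$ lying strictly between the two in the root list, $R(z)$ genuinely shifts (the loser's subtree either enters or leaves $z$'s right-sibling total, depending on which side $z$ sits), so $\zeta(z)$ can move and the first clause fails as a literal statement about an arbitrary pairing. This does not damage the paper's analysis: the only citation of the observation is to invoke the bound $\zeta\le 1$, and the \opEm\ lemma compares potentials only before and after an entire contiguous block is processed, where the node totals on either side of the block are conserved. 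Your proof is correct for the adjacent case, which covers every place the statement is actually used.
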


\begin{lemma} \jlabel{lem:potSib} \checked
In a heap of size $n$, the sum of raw potentials of $k$ nodes which are mutual siblings is at least $k-\log_\frac{3}{2} n$.
\end{lemma}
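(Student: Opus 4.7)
The plan is to label the $k$ mutual siblings as $y_1, y_2, \ldots, y_k$ from left to right and bound the total deficit $\sum_{i=1}^k \bigl(1 - \zeta(y_i)\bigr)$ from above by $\log_\frac{3}{2} n$ via a telescoping sum. I would first exploit the structural identity $R(y_i) = \sum_{j > i} L(y_j) = S(y_{i+1})$ for $i < k$, with $R(y_k) = 0$. Setting $r_i := S(y_{i+1})/S(y_i) \in [0,1]$ for $i < k$, the piecewise definition of $\zeta$ gives $1 - \zeta(y_i) = \min\{1, \max\{0, 2 - 3 r_i\}\}$, while $\zeta(y_k) = 0$ since $y_k$ is automatically left-heavy ($R(y_k) = 0$).

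The core step is to prove the per-sibling inequality
\[
1 - \zeta(y_i) \;\leq\; \log_\frac{3}{2}\!\bigl(S(y_i)/S(y_{i+1})\bigr) \qquad (1 \leq i < k),
\]
by a three-way case split on $r_i$. When $r_i \geq 2/3$ (right-heavy) the left side is $0$ and the right side is nonnegative; when $r_i \leq 1/3$ (left-heavy) the left side is $1$ while the right side is at least $\log_\frac{3}{2} 3 > 1$; in the transitional regime $r_i \in (1/3, 2/3)$ the inequality $2 - 3r \leq -\log_\frac{3}{2} r$ reduces to a routine calculus check, since the difference of the two sides is monotone on $(1/3,2/3)$ and strictly positive at the right endpoint.

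With this inequality, summing over $i = 1, \ldots, k-1$ collapses telescopically to $\log_\frac{3}{2}\bigl(S(y_1)/S(y_k)\bigr)$. Adding the deficit $1$ contributed by $y_k$ and using $S(y_1) \leq n$ together with $S(y_k) \geq 1$ yields $\sum_{i=1}^k \bigl(1 - \zeta(y_i)\bigr) \leq \log_\frac{3}{2} n + 1$, i.e.\ $\sum_i \zeta(y_i) \geq k - \log_\frac{3}{2} n - O(1)$. The exact bound stated in the lemma follows by slightly tighter bookkeeping: whenever $S(y_k) \geq 2$ the telescoping already produces a $-\log_\frac{3}{2} 2 > 1$ surplus that absorbs the $+1$ from $y_k$, and the remaining case $S(y_k) = 1$ (in which $y_k$ is a leaf) is handled by combining the last two terms directly, observing that $y_{k-1}$ then has $R(y_{k-1}) = 1$ and $S(y_{k-1}) = L(y_{k-1}) + 1$ so its deficit is reduced enough to compensate.

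The main obstacle is the transitional-case inequality $2 - 3r \leq -\log_\frac{3}{2} r$ on $(1/3,2/3)$: this is essentially the concavity of $\log$ dressed up, but the constants must be pinned down carefully so that the bound matches at $r = 2/3$ (where both deficit and telescoping contribution vanish). Everything else is straightforward bookkeeping once $R$ is identified with the running tail sum of $L$'s and the telescoping structure is exposed.
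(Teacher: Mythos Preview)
Your approach is correct but substantially more intricate than the paper's. The paper simply observes that only left-heavy and transitional nodes contribute to the deficit $\sum_i (1-\zeta(y_i))$, since right-heavy nodes have $\zeta = 1$ exactly. It then \emph{counts} such nodes: if $x_1, \ldots, x_\ell$ are the non-right-heavy siblings in left-to-right order, each satisfies $R(x_i) \leq \tfrac{2}{3} S(x_i)$, and since $R(x_i) \geq S(x_{i+1})$ this forces $S(x_i) \geq \tfrac{3}{2} S(x_{i+1})$, whence $\ell \leq \log_{3/2} n$. Bounding every deficit term crudely by $1$ then gives the lemma in one line --- no per-node inequality, no case split on the transitional regime, and no endpoint bookkeeping for $y_k$.

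Your telescoping argument recovers the same geometric decay, but works harder: you track the exact value of $\zeta$ through the transitional range, must verify the calculus inequality $2 - 3r \leq -\log_{3/2} r$ there, and then need the extra casework on $S(y_k)$ to absorb the stray $+1$. One small imprecision worth noting: the identity $R(y_i) = S(y_{i+1})$ and the claim $R(y_k)=0$ hold only when $y_1,\ldots,y_k$ are \emph{all} the siblings (or at least consecutive with $y_k$ rightmost); the lemma as stated allows an arbitrary subset of siblings, where one only has $R(y_i) \geq S(y_{i+1})$. This does not break your argument --- a larger $R(y_i)$ only decreases $1-\zeta(y_i)$, so your equalities become inequalities pointing the right way --- but it should be stated that way. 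The paper's proof sidesteps this entirely by using $R(x_i) \geq S(x_{i+1})$ from the outset and never needing the precise value of $\zeta$ for transitional nodes.
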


\begin{proof}
In order for a node to not have raw potential 1, it must be left heavy or transitional. We will bound the number of such nodes.
Label the mutual siblings which are left heavy or transitional $x_1, x_2, \ldots , x_\ell$ in left-to-right order.
Thus, $R(x_i) \leq \frac{2S(x_i)}{3}$; combining this with the fact that $R(x_i) \geq S(x_{i+1})$ gives $S(x_i) \geq \frac{3}{2}S(x_{i+1})$. Since $S(x_1) \leq n$ and $S(x_k)\geq 1$, this gives $\ell \leq \log_\frac{3}{2} n$, which gives the result.
\end{proof}

\begin{lemma} \jlabel{lem:potPath} \checked
In a heap of size $n$, the sum of the raw potentials of any set $S$ of nodes which are mutual ancestors/descendants (i.e.~a subset of a root-to-leaf path) is at most $\log_\frac{3}{2} n$.
\end{lemma}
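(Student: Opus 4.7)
The plan is to prove a pointwise bound $\zeta(x) \leq \log_{3/2}(S(x)/L(x))$ for every node, and then to telescope this bound along the path, exploiting the fact that if $y$ is a strict descendant of $x$ then $S(y) \leq L(x) - 1$.

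For the pointwise bound, split into the three cases from the definition of raw potential. If $x$ is left-heavy, $\zeta(x) = 0$ while the right-hand side is non-negative. If $x$ is right-heavy, then $L(x) \leq S(x)/3$, so $\log_{3/2}(S(x)/L(x)) \geq \log_{3/2} 3 > 1 = \zeta(x)$. In the transitional range, one has $L(x)/S(x) = (2-\zeta(x))/3$, and the inequality reduces to $(3/2)^{\zeta}(2-\zeta) \leq 3$ for $\zeta \in [0,1]$. A short derivative calculation shows the left-hand side is decreasing on $[0,1]$ with maximum value $2$ at $\zeta = 0$, so the bound holds comfortably.

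Now label the nodes in $S$ along the path as $x_1, x_2, \ldots, x_k$ so that $x_{i+1}$ is a strict descendant of $x_i$. Since the subtree rooted at the parent of $x_{i+1}$ lies in $x_i$'s subtree but does not contain $x_i$, we have $S(x_{i+1}) \leq L(\mathrm{parent}(x_{i+1})) - 1 \leq L(x_i)$. Combined with the pointwise bound this gives
\[
\zeta(x_i) \leq \log_{3/2} \frac{S(x_i)}{L(x_i)} \leq \log_{3/2} \frac{S(x_i)}{S(x_{i+1})} \quad \text{for } i < k.
\]
Summing these inequalities and adding $\zeta(x_k) \leq \log_{3/2}(S(x_k)/L(x_k))$ telescopes to $\log_{3/2}(S(x_1)/L(x_k)) \leq \log_{3/2} n$, using $S(x_1) \leq n$ and $L(x_k) \geq 1$.

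The main obstacle is the transitional case of the pointwise inequality; the two extreme cases are immediate from the definitions, but one must verify the elementary inequality $(3/2)^{\zeta}(2-\zeta) \leq 3$ on $[0,1]$ and, equally importantly, pick the right ``charging'' quantity $S(x)/L(x)$ so that $L(x_i) \geq S(x_{i+1})$ makes everything telescope. Once these ingredients are in hand the remainder of the argument is routine.
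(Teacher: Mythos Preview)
Your proof is correct, but it takes a more elaborate route than the paper's. The paper simply observes that any node with nonzero raw potential is right-heavy or transitional, hence satisfies $L(x) \leq \frac{2}{3}S(x)$; combining this with $S(x_{i+1}) \leq L(x_i)-1$ along the chain shows there can be at most $\log_{3/2} n$ nodes with nonzero raw potential, and each such node contributes at most $1$. There is no need to analyze the transitional case separately or to verify an inequality like $(3/2)^{\zeta}(2-\zeta)\le 3$.

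Your approach instead establishes the sharper pointwise bound $\zeta(x)\le \log_{3/2}(S(x)/L(x))$ and telescopes it directly. This is a legitimate and in some sense more ``continuous'' argument---it tracks the actual value of $\zeta$ rather than just whether it is zero or not---and the telescoping via $S(x_{i+1})\le L(x_i)$ is clean. The cost is the extra case analysis and the small calculus lemma for the transitional range, which the paper's cruder bound sidesteps entirely. Both arguments are short, but the paper's is the more economical of the two, essentially reusing the structure of the preceding sibling lemma.
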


\begin{proof}
The proof is similar to that of the previous lemma, with the minor need to account for the asymmetry of a node itself counting in $L(x)$ and not in $R(x)$. Let $x_1, x_2, \ldots, x_\ell$ be those nodes in $S$ which have nonzero raw potential, ordered such that each node is a descendant of those that precede it and an ancestor or those that follow it. Since each node $x_i$ has non-zero raw potential, it must be either right-heavy or transitional, which gives $L(x_i) \leq \frac{2S(x_i)}{3}$. Since $S(x_{i+1}) \leq L(x_i)-1$, we obtain $\frac{3}{2}S(x_i) \geq S(x_{i+1})$. Since $S(x_1) \leq n$ and $S(x_k)\geq 1$, this gives $\ell \leq \log_\frac{3}{2} n$, which gives the result.
\end{proof}

\begin{lemma} \jlabel{lem:remnode}
Removing a node and its subtree from a heap increases the sum of the raw potential of the remaining nodes of the heap by at most $13$.
\end{lemma}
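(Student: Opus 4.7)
My plan is to identify the nodes whose raw potential can change when $y$ and its size-$s$ subtree are removed, establish the sign of each change, and then bound the total increase using the structural inequality $L(z_i)\geq S(z_{i-1})+1$ that holds along any ancestor chain.

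I will first observe that the only nodes whose $(L,R,S)$ values are affected by the removal are (i) the general-tree ancestors $z_1,z_2,\ldots,z_k$ of $y$, whose $L$-value drops by $s$ while $R$ is unchanged, and (ii) the ``left uncles'' of $y$---the left siblings of $y$ and of each of its proper ancestors---whose $R$-value drops by $s$ while $L$ is unchanged. Every other node's $L$, $R$, and $S$ refer only to subtrees disjoint from $y$'s subtree and so are preserved. At each ancestor the ratio $R(z_i)/S(z_i)$ weakly rises, so $\zeta(z_i)$ can only weakly increase; at each left uncle the ratio falls, so those contributions are non-positive. Hence the total increase in the sum of raw potentials of the remaining nodes is bounded by $\sum_{i=1}^k \Delta\zeta(z_i)$.

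Next I will bound this sum analytically. Writing $T_i := S(z_i)-s$ and $R_i := R(z_i)$, a direct case analysis of the piecewise-linear clipping $\zeta=\max(0,\min(1,3R/S-1))$ gives
$$\Delta\zeta(z_i) \;\leq\; \min\!\Bigl(1,\; \tfrac{3 R_i\, s}{T_i (T_i + s)}\Bigr),$$
and $\Delta\zeta(z_i)>0$ forces $R_i>T_i/3$; call such $z_i$ \emph{contributing}. Subtracting $s$ from the structural inequality $L(z_i)\geq S(z_{i-1})+1$ (with convention $S(z_0)=s$) yields $T_i-R_i\geq T_{i-1}+1$; combined with $R_{i+1}>T_{i+1}/3$ at the next contributor this gives the geometric growth $T_{i+1}\geq \tfrac{3}{2}T_i+\Omega(1)$ between consecutive contributing ancestors, and a sharper inequality $T_{i+1}\geq 3T_i-O(s)$ between consecutive near-unit jumps of $\zeta$. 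Partitioning the contributing ancestors by whether $T_i\geq s$ or $T_i<s$, the first group contributes a convergent geometric tail $\sum 3s/T_i$ of total size a small absolute constant, while the second group will be shown to have only $O(1)$ members via the tripling inequality, each contributing at most $1$. Summing and chasing the constants yields the claimed bound of $13$.

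The main obstacle is controlling the second group: the naive bound on its cardinality is $O(\log s)$, which would inflate the total to $O(\log n)$. The absolute constant bound instead comes from the clipping $\zeta\leq 1$ combined with the sharper tripling inequality, which together force $T_i$ to exceed $s$ after only a constant number of near-unit jumps---at which point the geometric tail from the first group takes over. Getting all the constants right in this hybrid estimate is the heart of the proof.
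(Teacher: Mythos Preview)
Your identification of which nodes can gain potential, and the sign analysis for ancestors versus left uncles, is correct and matches the paper. The gap is your claim that the set of contributing ancestors with $T_i<s$ has only $O(1)$ members. Your ``tripling inequality'' $T_{i+1}\ge 3T_i-O(s)$ is vacuous in exactly that regime, and even the clean $\tfrac32$-growth $T_{i+1}>\tfrac32 T_i+\tfrac32$ (which does follow from $T_i\le L'(z_{i+1})-1<\tfrac23 T_{i+1}$, using ``not left heavy after'') only shows there are $O(\log s)$ such ancestors, since the chain starts from $T_1\ge 0$ rather than from something comparable to $s$. Each such ancestor can contribute a full $\tfrac12$ or more to the sum, so this is not a harmless overcount.

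In fact the statement as written is false, so no argument along these lines can succeed. Take a chain $y=z_0,z_1,\dots,z_k$ in which $z_{i-1}$ is the leftmost child of $z_i$, the removed node $y$ has subtree size $M$ and no right siblings, and for $i\ge 1$ the right siblings of $z_i$ have total subtree size $b_i=2^i-1$. Then one checks $L(z_i)=M+2^i-1$ and $R(z_i)=2^i-1$, so for every $i\le\log_2(M{+}1)$ the node $z_i$ is left heavy before removal ($\zeta=0$) but has $L'=R'=2^i-1$ after removal ($\zeta'=\tfrac12$), giving $\Delta\zeta(z_i)=\tfrac12$. Summing yields a raw-potential increase of at least $\tfrac12\lfloor\log_2(M{+}1)\rfloor$, which exceeds $13$ once $M\ge 2^{27}$. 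The paper's own proof has a parallel gap: the step ``for the same reasons as in Lemma~\ref{lem:potPath}, $S(x_i)\ge\tfrac32 S(x_{i+1})$'' does not follow, because knowing $x_i$ is left heavy or transitional \emph{before} removal yields only $L(x_i)>\tfrac13 S(x_i)$, not the needed $L(x_i)\le\tfrac23 S(x_i)$, and the example above violates the claimed inequality.
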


\begin{proof}
The only nodes that can have their raw potential increase are the (former) ancestors of the disconnected node. The siblings to the left of the (former) ancestors of the disconnected node could also have their potential change, but this change will always be negative.
Let  $x_1, x_2, \ldots, x_\ell$ denote these nodes which have their raw potential increase, ordered such that each node is a descendant of those that precede it and an ancestor of those that follow it.

All nodes $x_i$ will have $L(x_i)$ decrease by the size of the removed subtree (call it $M$) while $R(x_i)$ will remain the same. We will use $L(), R(), S()$ and $\varphi()$ to refer to the values before detaching the node, and $L'()=L()-M, R'()=R(), S'()=S()-M$ and $\varphi'()$ to be defined as a function of the state of the structure after detaching the node. The only way these changes can cause an increase in $x_i$'s raw potential is when $x_i$ is left heavy or transitional before the node is disconnected; if it is right heavy, it will remain right heavy with unchanged unit raw potential. For the same reasons as in Lemma~\jref{lem:potPath}, we have $S(x_i) \geq\frac{3}{2} S(x_{i+1})$. This in turn implies that 

$$S(x_i) \geq \left( \frac32 \right)^{\ell-i}S(x_\ell),   $$

 which since $S(x_\ell) \geq M$ gives 
 
 \begin{equation}
 S(x_i) \geq \left( \frac32 \right)^{\ell-i}M . \jlabel{eq:grow}
\end{equation}

Consider the case where $x_i$ changes potential. 
Since it was transitional or left heavy before, and becomes right heavy or transitional after, we know that its increase in raw potential is:

\begin{align}
 \zeta'(x_i) - \zeta(x_i)
& = \overbrace{\min\left(\overbrace{1}^{\substack{\text{right}\\\text{heavy}}},\overbrace{\frac{R'(x_i)-\frac{1}{3}S'(x_i)}{\frac{1}{3}S'(x_i)}}^{\text{transitional}}\right)}^{\text{potential after}} - 
\overbrace{\max \left( \overbrace{0}^{\substack{\text{left}\\ \text{heavy}}},\overbrace{\frac{R(x_i)-\frac{1}{3}S(x_i)}{\frac{1}{3}S(x_i)}}^{\text{transitional}} \right)}^{\text{potential before}} \nonumber \\
\intertext{The use of min and max works because the formula for a transitional node would be at most 0 when applied to a left heavy node and at least 1 when applied to a right heavy node. Removing the min and max can only increase the gain:}
& \leq \frac{R'(x_i)-\frac{1}{3}S'(x_i)}{\frac{1}{3}S'(x_i)} - 
 \frac{R(x_i)-\frac{1}{3}S(x_i)}{\frac{1}{3}S(x_i)}  \nonumber \\
\intertext{Using the fact that $S'(x_i)=S(x_i)-M$, and $R(x_i)=R'(x_i)$: }
 &= \frac{R(x_i)-\frac{1}{3}(S(x_i)-M)}{\frac{1}{3}(S(x_i)-M)} - \frac{R(x_i)-\frac{1}{3}S(x_i)}{\frac{1}{3}S(x_i)}\nonumber   \nonumber \\
 &= \frac{3R(x_i)M}{S(x_i)(S(x_i)-M)} \nonumber \\
\intertext{Since $\frac{R(x_i)}{S(x_i)}<1$}
&<\frac{3M}{S(x_i)-M} \nonumber \\
\intertext{Using \jeqref{eq:grow}:}
&\leq \frac{3M}{\left( \frac32 \right)^{\ell-i}M-M} \nonumber \\
&= \frac{3}{\left( \frac32 \right)^{\ell-i}-1} \jlabel{eq:cutbound}
\end{align}

Thus we can bound the sum of the raw potential increase of the nodes $x_1, x_2, \ldots , x_{\ell-1}$ using equation~\jeqref{eq:cutbound} as follows:

\begin{equation}
\sum_{i=1}^{\ell-1} (\zeta'(x_i)-\zeta(x_i)) \leq
 \sum_{i=1}^{\ell-1} \frac{3}{\left( \frac32 \right)^{\ell-i}-1} < 12 \jlabel{eq:sumcutbound}
 \nonumber
 \end{equation}

Using Mathematica, a closed form solution to the limit of this sum was obtained that has an upper bound of 12. The node $x_\ell$ was excluded from the above sum, as this would cause a divide-by-zero. However, any single node can only change by 1, so including $x_\ell$ will give an upper bound of 13.

As these are the only nodes that can change potential, this completes the lemma.

\end{proof}

\subsubsection{Potential impact resulting from the growth of the structure} \checked

One fact will be of use, is to show that as $n$ changes this does not have a large effect on the potential of the nodes, even though all nodes are multiplied by $\log \log n$.

\begin{fact} \jlabel{f:lln}
If $n \geq 3$, then $n \log \log n - n \log \log (n-1) \leq 2$
\end{fact}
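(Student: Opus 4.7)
The plan is to separate the boundary case $n=3$ from the asymptotic range $n \ge 4$, since the inequality is essentially tight at $n=3$. I assume $\log = \log_2$ (so that $\log\log 2 = 0$); the proof is the same with any convenient base up to minor constants.

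Let $g(x) = \log\log x$. A direct computation gives $g'(x) = \frac{1}{x \ln x \cdot \ln 2}$, which is positive and decreasing on $[2,\infty)$, so $g$ is concave there. Concavity (equivalently, the MVT applied to $g$) then yields, for every $n \ge 3$,
\[
\log\log n - \log\log(n-1) \;\le\; g'(n-1) \;=\; \frac{1}{(n-1)\,\ln(n-1)\,\ln 2}.
\]
Multiplying by $n$,
\[
n\log\log n - n\log\log(n-1) \;\le\; \frac{n}{(n-1)\,\ln(n-1)\,\ln 2}.
\]
A short check (take the logarithmic derivative; it is manifestly negative) shows the right-hand side is a decreasing function of $n$ on $[4,\infty)$, so for $n\ge 4$ it is bounded by its value at $n=4$, namely $\tfrac{4}{3\ln 3 \ln 2} \approx 1.75 < 2$. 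That disposes of every $n \ge 4$.

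The case $n=3$ is not handled by the estimate above (which gives only $\approx 3.12$), because $g'$ varies too much across $[2,3]$ for the concavity bound to be sharp. Handle it directly: since $\log\log 2 = 0$, the inequality reduces to $3\log_2\log_2 3 \le 2$, i.e.\ $\log_2 3 \le \sqrt[3]{4}$, i.e.\ $(\log_2 3)^3 \le 4$. Numerically $(\log_2 3)^3 \approx 3.98$, so the inequality holds (with the very thin slack $3\log_2\log_2 3 \approx 1.993$).

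The only real obstacle is the tightness at $n=3$: any clean asymptotic estimate will overshoot $2$ there, so a direct arithmetic verification of the boundary is unavoidable.
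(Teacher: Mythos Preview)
The paper states this as a \emph{Fact} without proof, so there is no argument to compare against. Your proof is correct: the MVT/concavity bound $g(n)-g(n-1)\le g'(n-1)=\tfrac{1}{(n-1)\ln(n-1)\ln 2}$ is valid on $[2,\infty)$, the monotonicity check for $n\mapsto \tfrac{n}{(n-1)\ln(n-1)}$ via the logarithmic derivative is clean, and the direct verification at $n=3$ (where the inequality is essentially sharp, $3\log_2\log_2 3\approx 1.993$) is unavoidable and handled correctly. The assumption $\log=\log_2$ is the natural reading here and is consistent with the paper's conventions.
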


\subsection{Analysis}

\subsubsection{\opIns(x)}

\begin{lemma}
The amortized cost of \opIns\ in a sort heap is $O(\log \log n)$.
\end{lemma}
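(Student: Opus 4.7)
The plan is to use the potential method directly, decomposing the amortized cost into (i) the $O(1)$ actual work of creating a node and attaching it as the leftmost root, (ii) the potential contribution of the newly inserted node, and (iii) the effect on existing nodes, including both structural changes and the rescaling of the $c\log\log n$ factor caused by $n$ increasing by one.

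First I would observe that attaching the new node $x$ on the leftmost side changes neither $L(y)$ nor $R(y)$ for any pre-existing node $y$: such a node's induced subtree is untouched, and its right siblings (in the forest or anywhere in the tree) are unchanged, since $x$ is inserted to the left of all current roots. Hence no existing node changes its raw potential as a result of the structural modification. Next, I would compute the raw potential of the new node itself: $L(x)=1$, $R(x)=n-1$, and $S(x)=n$, where $n$ is the new size. For $n\geq 3$ one has $R(x)=n-1\geq \tfrac{2}{3}n$, so $x$ is right heavy and contributes raw potential exactly $1$, i.e., potential $c\log\log n$.

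The remaining subtlety is that the scaling factor $\log\log n$ increases when $n$ increases to $n$ from $n-1$, which inflates the potential of every node in the heap. The total raw potential is at most $n$ (each node has raw potential in $[0,1]$), so the extra contribution is at most $c\cdot n(\log\log n - \log\log(n-1))$, which is bounded by $2c=O(1)$ by Fact~\ref{f:lln}. Combining, the amortized cost is
\[
\underbrace{O(1)}_{\text{actual}} \;+\; \underbrace{c\log\log n}_{\text{new node}} \;+\; \underbrace{O(1)}_{\text{rescaling}} \;=\; O(\log\log n),
\]
as claimed. Edge cases $n\leq 2$ are handled trivially since they incur $O(1)$ total cost and bounded potential.

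I do not expect any real obstacle here: the potential function was designed so that the only node whose raw potential is affected by a leftmost insertion is the new node itself, and Fact~\ref{f:lln} was introduced precisely to absorb the rescaling from $\log\log(n-1)$ to $\log\log n$. The only thing to be careful about is double-checking that $L$ and $R$ are genuinely insensitive to adding a left sibling at the root level, which follows directly from their definitions in terms of induced subtrees and right siblings only.
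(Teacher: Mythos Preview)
Your proposal is correct and follows essentially the same approach as the paper: bound the actual cost by $O(1)$, the new node's potential by $c\log\log n$, and the rescaling of the $\log\log n$ multiplier via Fact~\ref{f:lln} by $O(1)$. The only minor difference is that the paper's proof additionally allows for a possible $O(\log\log n)$ change in the potential of ``the former root'' (as if a pairing were performed on insert), whereas you correctly observe that for leftmost-forest insertion no existing node's $L$ or $R$ values change at all; your version is in fact the one that matches the stated implementation.
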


\begin{proof}
The actual cost is $1$ as a single pairing is performed. View the change of potential in two steps: the first is adjusting the $\log \log n$ multiplier of the potential of all the nodes caused by the incrementing of $n$. Fact~\jref{f:lln} tells us that this increases the potential by at most $O(1)$. Then look at the potential of the new inserted node; this is $O(\log \log n)$ since the raw potential of a node is at most~1 by observation~\jref{obs:one}. The only other change possible is in the potential of the former root of the heap, again this is trivially bounded by $O(\log \log n)$ since this is the maximum attainable potential. Thus combining the actual cost plus the change in potential gives the $O(\log \log n)$ amortized cost.
\end{proof}

\subsubsection{\opDc$(p,\delta x)$}

\begin{lemma}
The amortized cost of \opDc\ in a sort heap is $O(\log \log n)$ \checked
\end{lemma}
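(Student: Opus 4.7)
The plan is to bound the amortized cost directly via the potential method, showing that the actual cost is $O(1)$ and the change in potential is $O(\log \log n)$. Since \opDc\ does not change $n$, there is no need to invoke Fact~\ref{f:lln}; only the structural changes matter.

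I would decompose the operation into two conceptual structural steps (with the key-value decrease, which does not alter any $L,R,S$ value, inserted between them):
\begin{enumerate}
\item \emph{Detach} the subtree rooted at $p$ from its parent (trivial if $p$ is already a root).
\item \emph{Attach} $p$ as the leftmost root of the forest.
\end{enumerate}

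For step~1, if $p$ was not already a root, Lemma~\ref{lem:remnode} applies directly: the sum of the raw potentials of the remaining nodes increases by at most $13$, so the true potential increases by at most $13 c \log \log n = O(\log \log n)$. If $p$ was already a root, the detach itself is a no-op.

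For step~2, the central observation is that placing $p$'s subtree as the \emph{leftmost} root cannot increase the raw potential of any node other than $p$ itself. Indeed, for any already-existing root $y$, the quantity $L(y)$ depends only on $y$'s subtree and is unchanged, while $R(y)$ counts only \emph{right} siblings and so is unaffected by inserting a new root to its left. The internal structure of $p$'s subtree is preserved throughout, so $L,R,S$ (and hence raw potential) of every proper descendant of $p$ is unchanged. The only node whose raw potential can grow is $p$ itself, and since raw potential lies in $[0,1]$, this contributes at most $c\log\log n$ to the true potential.

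Combining the two steps, the total potential increase is at most $14\, c\log\log n = O(\log\log n)$, while the actual cost is $O(1)$ (one cut, one decrement, one link at the leftmost position). Hence the amortized cost of \opDc\ is $O(\log \log n)$. The only real subtlety, and the place where I would expect to spend a line of care, is in verifying step~2: that making $p$ leftmost (rather than, say, inserting it somewhere in the middle of the root list) is precisely what guarantees $R(y)$ is undisturbed for all other roots $y$; this is what makes the new root placement cost only a single node's worth of potential instead of invoking Lemma~\ref{lem:remnode} a second time.
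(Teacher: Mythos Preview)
Your proof is correct and follows essentially the same two-step decomposition as the paper: bound the detachment via Lemma~\ref{lem:remnode}, then observe that only $O(1)$ nodes' potentials can change in the re-attachment step. The only cosmetic difference is that the paper phrases step~2 as ``pair $y$ with the root of the tree'' (bounding two nodes' potential change), whereas you follow the forest description literally and place $p$ as the leftmost root (bounding one node's change); both yield $O(\log\log n)$.
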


\begin{proof}
The actual cost is $1$. We analyze the potential in two steps. Let $y$ be the node pointed to by~$p$. The first step is to detach $y$ from its parent and decrease its key. 
This could cause changes in the potential in the (former) ancestors of $y$. Using Lemma~\jref{lem:remnode}, the increase in potential is bounded to be at most $12 \log \log n$. The second step is to pair $y$ with the root of the tree. This can change the potentials of only $y$ and the root of the tree, giving an easy bound on the potential gain of $2 \log \log n$. Thus combining the actual cost plus the gain in potential gives an amortized cost of at most $1+ 12 \log \log n = O(\log \log n)$.
\end{proof}

\subsubsection{\opEm} \jlabel{s:shem}

\begin{lemma}
The amortized cost of \opEm\ in a sort heap is $O(\log n \log \log n)$. \jlabel{l:em}
\end{lemma}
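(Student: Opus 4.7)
The plan: I will derive actual cost $O((k+\log n)\log\log n)$ for the operation (with $k = k_0-1+p$ counting the $k_0-1$ old roots surviving the removal of the minimum $r_{\min}$ together with $r_{\min}$'s $p$ promoted children) and show a potential change $\Delta\Phi\le -\Omega(k\cdot c\log\log n) + O(\log n\log\log n)$, so that summing yields amortized cost $O(\log n\log\log n)$. The actual cost decomposes as $O(k_0)$ for the linear scan, $O(\log n\log\log n)$ per group for comparison-sorting each of the $\lceil k/\log n\rceil$ groups of at most $\log n$ candidate roots, $O(k)$ for within-group pairings, and $O(k/\log n)$ for combining the resulting spines.

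For the potential analysis I first observe that pairings at the root level modify $L$, $R$ and $S$ only for the two roots being paired and leave those values unchanged for every node strictly below the current root forest, so the change in $\Phi$ is confined to the $k$ candidate roots plus the deletion of $r_{\min}$, which contributes at most $O(\log\log n)$ via a Lemma~\ref{lem:remnode}-style estimate. Before the operation the $k_0$ old roots are mutual siblings and the $p$ former children of $r_{\min}$ are mutual siblings, so applying Lemma~\ref{lem:potSib} twice yields $\Phi^{\text{cand}}_{\text{before}}\ge (k-O(\log n))\cdot c\log\log n$. I also use the fact (an easy consequence of the pairing rule) that each pairing $(p_i,p_j)$ with $p_i$ the smaller-key root is a strict left-shift of $p_i$'s $L/R$ balance: $L(p_i)$ grows by $s(p_j)$ and $R(p_i)$ shrinks by the same amount while $S(p_i)$ is preserved, so the surviving root can only become more left-heavy.

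The main obstacle is showing that $\Phi^{\text{cand}}_{\text{after}}\le (1-\delta)\,k\cdot c\log\log n + O(\log n\cdot c\log\log n)$ for some constant $\delta>0$. After the within-group sort-and-pair the $\log n$ candidate roots of each group lie on an ancestor-descendant chain (the spine), with the smallest key at the top and each subsequent root attached as the new leftmost child of its predecessor. The key structural fact to exploit is that on this spine $L(k_j)=s_j+\cdots+s_{\log n}$ and $R(k_j)=s_{j-1}-1$ (where $s_i$ is the original subtree size of the $i$-th-smallest candidate root in the group), so the $L$-values accumulate tails of subtree sizes while $R$-values come from a single predecessor. Combined with Lemma~\ref{lem:potPath}, which bounds the count of non-left-heavy nodes on any ancestor-descendant chain by $\log_{3/2}n$, this forces a constant fraction of the spine nodes into the left-heavy regime with $\zeta=0$, so each spine's raw potential after the operation is at most a constant fraction of $\log n$ plus lower-order terms. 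A matching analysis along the spine-of-spines (using Lemma~\ref{lem:potPath} on the global leftmost-child path of the final tree, which passes through all $g$ spine-roots followed by an entire spine) bounds the contribution of the spine-roots themselves, completing the bound. Choosing $c$ sufficiently large, as the lemma statement allows, then makes the coefficient of $k$ in the amortized cost non-positive, leaving $O(\log n\log\log n)$.
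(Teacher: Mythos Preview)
Your plan mirrors the paper's own proof: lower-bound the pre-pairing potential of the $k$ candidate roots via Lemma~\ref{lem:potSib}, upper-bound the post-pairing potential of each spine via Lemma~\ref{lem:potPath}, and then choose $c$ large enough to absorb the sorting cost. Your global application of Lemma~\ref{lem:potSib} (once to the surviving old roots, once to $r_{\min}$'s former children) is in fact cleaner than the paper's per-block application, but the two arguments are otherwise the same.

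The key step, however, does not go through as stated---and the paper's own computation shares the defect. You assert that Lemma~\ref{lem:potPath} ``forces a constant fraction of the spine nodes into the left-heavy regime,'' so that each spine's raw potential after pairing is at most $(1-\delta)\log n$. But Lemma~\ref{lem:potPath} bounds the raw potential along any ancestor--descendant chain by $\log_{3/2} n$, and since $3/2<2$ one has $\log_{3/2} n\approx 1.71\log_2 n$, which \emph{exceeds} the $\log n$ spine length; the lemma says nothing about a fraction strictly below one. Summing over $\ell\approx k/\log n$ spines yields an after-potential upper bound of roughly $1.71k$, larger than your before-potential lower bound of $k-O(\log n)$, so $\Delta\Phi$ comes out positive rather than $-\Omega(k)$. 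The paper's displayed calculation exhibits the identical sign problem: its coefficient $\tfrac{c}{\log(3/2)}-c\approx 0.71c$ is positive, so enlarging $c$ makes that term bigger, not smaller---it cannot ``cancel the big-$O$ expression'' as claimed. With the block size fixed at $\log n$ the per-block raw-potential drop is at best $\log n - 2\log_{3/2}n<0$; to make the telescoping work one would need either blocks of size at least roughly $2\log_{3/2}n\approx 3.4\log n$, or a genuinely sharper spine estimate that exploits your explicit formulas $L(k_j)=s_j+\cdots+s_{\log n}$ and $R(k_j)=s_{j-1}-1$ together with the global constraint $\sum_b B_b\le n$ across blocks, rather than invoking Lemma~\ref{lem:potPath} as a black box.
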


\begin{proof} Let $\ell=\left\lceil \frac{n}{\consd \log n} \right\rceil$ be the number of blocks which are sorted as part of the \opEm ; they all have size exactly $\consd \log n$ except for possibly the last one.
The actual cost is therefore $O(\ell \log n \log \log n)$, as it costs $O(  \log n \log \log n)$ to sort each block, and the sorting dominates the actual cost.
The key observation is that all nodes in a block were mutual siblings, and because they are being paired in sorted order, they form a vertical ancestor/descendent chain when the pairings \opEm\ is complete.
Thus, by Lemma~\jref{lem:potSib}, in each full block the potentials before were at least 
$c(k-\log_\frac32 n)\log \log n$, since $k=\consd \log n$ this gives $\consd c  \log n \log \log n - c\log_\frac32 n \log \log n$; for the non-full block we simply bound the potentials as being at least 0. By Lemma~\jref{lem:potPath}, in each block the potential after sorting and pairing is at most $c\log_\frac32 n \log \log n$. The only change in potential not yet considered the the removal of the minimum element, which is the root, after all the pairings are complete; this causes a loss of potential and thus can be ingnored. Thus the total amortized cost is given by the sum of the actual costs and change in potential; this is at most:

\begin{align}
& \overbrace{O( \ell \log n \log \log n)}^{\text{actual cost}} +\overbrace{ \ell c \log_\frac32  n \log \log n}^{\text{new potential}}- \overbrace{(\ell-1)( \consd c \log n \log \log n - c\log_\frac32  n \log \log n)}^{\text{old potential}}
\nonumber
\\
%&= O(\ell  \log n \log \log n)+ \ell c \log_\frac32 n \log \log n- \ell  c \log n \log \log n  + c\log_\frac32 n \log \log n
%\nonumber
%\\
& =O(\ell  \log n \log \log n)- \overbrace{\left( \consd -\frac{2}{\log \frac32} \right)}^{\approx 0.58}  \ell c  \log n \log \log n  + 
\overbrace{\left(\consd-\frac{1}{\log \frac32} \right)}^{\approx 2.29}  c\log n \log \log n
\intertext{As we noted in the definition of the potential, the choice of the constant $c$ has been deferred until this point. The constant $c$ is chosen to be large enough so that the %$\left(\frac{c}{\log \frac32}-c \right) $ in the 
second term is sufficiently large so as to cancel the big-O expression, thus giving the amortized cost as:}
& \leq \left(\consd-\frac{1}{\log \frac32} \right) c\log_\frac32 n \log \log n= O(\log n \log \log n).
\nonumber
\end{align}

\end{proof}
\end{fullonly}

\section{Comments} \jlabel{s:comments}

Both the proof presented here and the proof of Fredman share many similar aspects. Fredman's restriction that comparisons can not be performed without a pairing allowed him to observe a property he called \emph{consistency}. This is the notion that he did not have to worry about the key values themselves, since their ranks were a perfect proxy for key values. Much of our effort has been spent to get a result without the consistency property. To do this, we have made some changes to the rank function compared to Fredman so that the rank of a node changes more slowly; we attempt to keep a node's rank in sync with its key value, but this is not completely possible. However, we show that through the use of the violation list and additional \opDc\ operations to re-key items whose rank and key value no longer correlate, something like consistency can be managed. The ideas of evolution and having the adversary maintain sets of sequences are new.

\begin{fullonly}
Looking forward, there are still unanswered questions and loose ends for possible future work:

\shortfull{}{\begin{itemize\shortfull{*}{}}}

\shortfull{}{\item} Our lower bound of $\Omega \left( \frac{\log \log n}{\log \log \log n} \right)$ for \opDc, differs by a $\log \log \log n$ factor from the best known pure heaps, and also from Fredman's lower bound. Can our low bound can be improved to remove the triple log, or is there a heap in our model with $o(\log \log n)$ amortized \opDc? Such a heap would necessarily not be in Fredman's model.

\shortfull{}{\item} In our definition of the pure-pointer model of heaps, we do not allow the algorithm to detach a node from its parent unless a \opDc\ is performed on it or if its parent is being removed as part of a \opEm\ operation. The lower bound should be able to be extended to allow such operations, which would make our lower bound apply to Elmasry's variants of heaps.

Similarly, our lower bound does not apply to Fibonacci heaps largely because of the bucket sorting used; this is intentional. However, we described the Pointer Fibonacci heap which made the Fibonacci heap into one with only heap pointers.
However, the Pointer Fibonacci heap remains outside of our pure heap model because the implementation of \opDc\ involves more than just cutting the node from its parent; there is a process called \emph{cascading cuts} whereby, in some cases, ancestors of the node being \opDc ed also are cut from their parents (recall from the introduction, in contrast, rank-pairing heaps are a simplification of Fibonacci heaps that do not use cascading cuts, and thus our lower bound applies to the natural pointer model variant of them). This technicality caused by the cascading cuts places them outside of our model. Fredman's lower bound also would not be able to deal with the cascading cuts, but his sequence of operations used in the lower bound only performs \opDc\ operations on roots and children of the the roots, which prevents Fibonacci heaps from ever invoking the cascading cut. Although our \opDc\ operations performed as part of the permutation evolution also are only performed on roots or children of roots, the \opDc\ operations which are performed on marked nodes as part of processing the violation list could cause cascading cuts to be performed.
We believe that our bound could be extended to allow the algorithm to unlink arbitrary nodes from their parents to cover situations like cascading cuts. 

\shortfull{}{\item} Our sort heap has a $O(\log n \log \log n)$-time \opEm\ and $O(\log \log n)$ \opDc. Is there a pure heap model heap with no augmented data with $O(\log  n)$ amortized time \opEm\ and $O(\log \log n)$ \opDc? We still do not know whether or not pairing heaps are such a heap.
\end{fullonly}

\shortfull{}{\end{itemize\shortfull{*}{}}}

\begin{fullonly}
\section{Acknowledgments}

The idea behind this work came to me while listening to Robert E.~Tarjan give a talk entitled \emph{Rank-Pairing Heaps} 
at the 4th Bertinoro Workshop on Algorithms and Data Structures (ADS) in 2009. I would like to thank the organizers of the workshop, Andrew V. Goldberg,
Giuseppe F. Italiano,  
Valerie King, and
Robert E. Tarjan for inviting me and creating an incredible forum for data structures research.
I would also recognize the contribution of the late Mihai P\v{a}tra\c{s}cu for some conversations we had at ADS 2009, sitting in the shadow of the Colonna delle Anelle, about the possibility of a bound of the type presented here.
%Siddhartha Sen 
Mark Yagnatisky and several anonymous reviewers read earlier versions this paper and pointed out a number of minor bugs. 
Finally I would like to thank my doctoral advisor, Michael L.~Fredman. He had me read his then-new lower bound for my qualifying exam in 1999; without intimate knowledge and understanding of that lower bound, this one would not be possible.
\end{fullonly}

%\pagebreak

\label{refs}

\bibliographystyle{alpha}
\bibliography{dblp,bib}

\end{document}